 \newtheorem{theorem}{Theorem}[section]
 \newtheorem{corollary}{Corollary}[theorem]
 \newtheorem{lemma}[theorem]{Lemma}
\theoremstyle{remark}
\newcommand{\ut}[1]{\underaccent{\sim}{#1}}
\renewcommand{\vec}[1]{\ut{#1}}
\begin{document}
\title{Non-parametric generalised newsvendor model}
\author{Soham Ghosh  \and Sujay Mukhoti}


\maketitle
\begin{abstract}
In classical newsvendor model, piece-wise linear shortage and excess costs are balanced out to determine the optimal order quantity. However, for critical perishable commodities, severity of the costs may be much more than linear. In this paper we discuss a generalisation of the newsvendor model with piece-wise polynomial cost functions to accommodate their severity. In addition, the stochastic demand has been assumed to follow a completely unknown probability distribution. Subsequently, non-parametric estimator of the optimal order quantity has been developed from a random polynomial type estimating equation using a random sample on demand. Strong consistency of the estimator has been proven when the true optimal order quantity is unique. The result has been extended to the case where multiple solutions for optimal order quantity are available. Probability of existence of the estimated optimal order quantity has been studied through extensive simulation experiments. Simulation results indicate that the non-parametric method provides robust yet efficient estimator of the optimal order quantity in terms of mean square error. 

  
\end{abstract}
\section{Introduction}
Newsvendor problem deals with determination of optimal order quantity of a perishable commodity by offsetting piece-wise linear shortage and excess costs and without allowing any backlog. The  decision for a single period problem is taken at the beginning, \emph{i.e.} before the random demand is realised
 \citep[see][and the references therein]{chernonog2018}. However, perishable critical resources would often warrant shortage and excess costs to be more severe than linear. For example, chemotherapy drugs are administered to patients as per a schedule. Shortage of the drug on the scheduled day would result in breaking of the treatment cycle. Here the loss is more severe than merely the quantity lost. Similarly, excess inventory of critical drugs or chemical resources might cause  vast environmental and microbial hazards during disposal of the excess material. In this work, we discuss a piece-wise non-linear alternative to the classical newsvendor model to accommodate severity in the decision \citep{mukhoti2021,halman2012approximating}. 

Non-linear newsvendor problem has been studied only recently in the literature. \cite{parlar1992} considered the periodic review inventory problem and derived the solution of a newsvendor problem with a quadratic cost function. \cite{gerchak1997} described optimal order quantity determination from a newsvendor problem with linear excess but quadratic shortage cost.
\cite{pal2015distribution} used exponential weight function of order quantity to the holding cost and linear excess cost in a newsvendor set-up. \cite{kyparisis2018price} addressed the newsvendor problem for quadratic utility function. 
 \cite{khouja1995newsboy,chandra2005}, among others, considered optimisation of reliability function of the stochastic cost. 
 In this paper, we consider generalisation of the classical newsvendor problem by modelling the severity of shortage and excess costs. In particular, we introduce measurable and continuous non-linear weights to the two types of costs and establish the conditions for existence of the optimal order quantity. 
 
A critical issue with the optimal order quantity determination in classical newsvendor problem is the lack of knowledge on random demand. Majority of the works assume a completely specified demand distribution, whereas in reality, it is seldom so. In case of unknown demand distribution, parametric and distribution-free estimation of the optimal order quantity has been considered more recently. Parametric estimation of the optimal order quantity has been studied by \cite{nahmias1994demand} and more recently, \cite{kevork2010estimating} for Normal demand. \cite{agrawal1996estimating} estimated the order quantity for negative binomial demand.  
\cite{rossi2014confidence} has given bounds on the optimal order quantity using confidence interval for parametric demand distributions. \cite{mukhoti2021} estimated optimal order quantity for uniform and exponential demands in non-linear newsvendor problem. 

Distribution free estimation of optimal order quantity, on the other hand, has been studied in two parallel ways in the context of classical newsvendor problem. In the first case, the investigator has access to population summary measures like mean, variance etc, but the demand distribution remains unknown \citep{bai2020distributionally}. \cite{scarf1958} and later \cite{moon1994distribution} studied the min-max optimal order quantity in such cases. The second approach considers the estimation problem based on an uncensored random sample from the unknown demand distribution. \cite{pal1996, bookbinder1998} discussed construction of bootstrap based point and interval  estimator of the optimal order quantity using demand data. The sampling average approximation (SAA) method \citep[see][]{shapiro2001,linderoth2006empirical}, replaces the expected cost  by the sample average of the corresponding objective function and then optimises it. \cite{levi2015data} provides bounds of the relative bias of estimated optimal cost using SAA based on uncensored demand data. 
However, not much work has been done on non-parametric estimation of optimal order quantity in non-linear newsvendor problems. 

 In this paper we devise a non-parametric technique to estimate the optimal order quantity in the generalised model. Our study makes two unique contributions to the literature. First, we develop a non-parametric estimator of the optimal order quantity in a generalised newsvendor set-up, which has not been attempted in the literature to the best of our knowledge. The non-parametric estimator is developed from an estimating equation using an uncensored random sample on stochastic demand. The feasibility of obtaining solutions to the estimating equation has been derived in almost-sure sense using its random polynomial representation. We have studied the asymptotic performance of the estimated optimal order quantity. We have shown strong consistency of the optimal order quantity estimator when the true one is unique. We also present the extension of the above strong consistency result in both the cases, where true optimal order quantity is not unique or both true and estimated optimal order quantities are not unique. Next, we have provided a simulation based way to estimate the probabilities of existence of feasible roots of a random polynomial and the distribution of the roots in the generalised newsvendor context. Our results on the properties of the estimated optimal order quantity are based on 3 million simulation experiments for \emph{Uniform} and \emph{Exponential} demand distributions. 
 We compute the optimal order quantities for the two demand distributions and study the properties of the probability distribution of the estimated optimal order quantity. 
 Since the existence of the estimator of optimal order quantity is not guaranteed, we provide a way to use the simulation results for computing the probabilities of their existence for different combinations of severity and cost for a large sample size of 10000. 
 We also present the performance study of the non-parametric estimator, in small and large samples, using the mean square errors. The paper concludes with a discussion on the findings. 

\section{Symmetric Generalised Newsvendor Problem}
We consider a single-period newsvendor problem where, excess inventory is disposed of at the end of the period with no salvation cost. We assume instantaneous replenishment of order quantities. Our work considers a case where the severity of the excess and shortage are more than the quantity lost ( \emph{i.e} the gap between inventory and demand). We also assume absence of any influencing factors like marketing efforts,promotions,discounts etc. 

Let the stochastic demand be represented by a random variable $X$ with a compact support $\mathcal{X}\subseteq \mathcal{R}^+$ defined over the complete probability space $(\Omega, \mathcal{F}, P)$, where $\mathcal{F}$ is the $\sigma$-algebra over $\Omega$. In this paper we do not consider pre-booking, which in turn implies $0\in \mathcal{X}$. Further, let $C_e~(0<C_e<\infty)$ and $C_s~(0<C_s<\infty)$ be the  excess and shortage costs per unit respectively. Then the cost function in classical newsvendor set-up at an inventory level $Q$ is given by
\begin{equation}
    C(Q,X)=\left\{ 
    \begin{array}{cc}
    C_e(Q-X), & if ~X\leq Q \\
    C_s(X-Q), & if ~X>Q
    \end{array}
    \right.
    \label{stdcost}
\end{equation}
Related stochastic programming problem under the assumption of existence of $E_{\mathbb{G}}[X]$, is given by
\begin{equation}
   \underset{Q\in \mathcal{X}}{argmin}~E_G[C(Q,X)]
    \label{stdnv}
\end{equation}
where $G(\cdot)$ is the induced probability distribution of $X$ defined over the measurable space $(\mathbb{R}^+, \mathcal{B}^+)$, where $\mathcal{B}^+$ is the corresponding Borel-algebra. 
We consider generalisation of quadratic cost function by introducing polynomial weights (in $Q$ and $X$) of degree $m$ (say, $P_{1,m}(Q,X)$ and $P_{2,m}(Q,X)$) to shortage and excess respectively. Degree of the polynomials ($m$) represents the (equal) severity of shortage and excess. If $m=0$, then the problem reduces to classical newsvendor problem. The severity polynomials should satisfy the following properties:
\begin{itemize}
    \item[(a)] for a given $X$, $P_{i,m}(Q,X)$ is continuously differentiable with respect to $Q ~(\in \mathcal{X})$ for $i=1,2$, up to order $m$ 
    \item[(b)] The $m^{th}$ derivative of $P_{i,m}(Q,X)$ is finite, $i=1,2$.
    \item[(c)] If for any  convergent sequence $\{X_n\}$ in $\mathcal{X}$,  $ X_n \overset{a.s.}{\to} Q$, then $P_{i,m}(Q,X_n)\overset{a.s.}{\to} 0$ for $i=1,2$ $(a.s.\; \Rightarrow\; almost\; sure)$.
\end{itemize} 
 Based on the above properties, a natural choice for the severity polynomials are as follows:
\begin{align}
    & P_{1,m}(Q,X)= \sum_{j=0}^{m-1} (-1)^{m-1-j} \binom{m-1}{j}Q^{j}X^{m-1-j}=(Q-X)^{m-1} \label{powerwt1} \\
    & P_{2,m}(Q,X)=\sum_{j=0}^{m-1} (-1)^{m-1-j} \binom{m-1}{j}Q^{m-1-j}X^j = (X-Q)^{m-1} 
    \label{powerwt2}
\end{align}
The constant $m$ is integer valued and $m-1$ could be interpreted as the severity constant. As $m$ increases, more severe is the loss. For $m=1$, no extra severity is implicated and the problem reduces to the classical newsvendor problem.
Thus the new cost function for generalised newsvendor is given by
\begin{equation}
    C_m(Q,X)=\left\{ 
    \begin{array}{cc}
    C_e(Q-X)^m, & if ~X\leq Q \\
    C_s(X-Q)^m, & if ~X>Q
    \end{array}
    \right.
    \label{sygencost}
\end{equation}
The new cost functions could also be interpreted as a generalisation of constant costs per unit ($C_e,C_s$) model to demand and inventory dependent cost models, viz. $C_e (Q-X)^{m-1}$ and $C_s(X-Q)^{m-1}$ respectively.

In view of the above weight function structure, we now make the following assumptions about the probability distribution of demand ($X$):
\begin{itemize}
    \item[A1.]  $\mathcal{X}$ is independent of $Q$
    \item[A2.] $G$ is continuous and strictly increasing over the support $\mathcal{X}$
    \item[A3.] $X^m$ is $\mathbb{G}$-integrable $\forall ~m\geq 0$
\end{itemize}
The assumption $A1$ is required to avoid the trivial solution of zero order quantity, which may arise for certain choices of demand distribution, the degree of severity ($m$) and the costs $(C_e,C_s)$. For example, if the demand is $Unif(0,2Q)$ then for $C_e=C_s$, the optimum order quantity would become zero. Hence, we make further assumption of $C_e\neq C_s$.  

The expected cost function in this case can be written as,
\begin{equation}
    \displaystyle E_{\mathbb{G}}[C_m(Q,X)] = \int_{S_Q} C_e(Q-x)P_{1,m}(Q,x) d\mathbb{G} + \int_{S_Q^\prime } C_s(x-Q)P_{2,m}(Q,X) d\mathbb{G}
    \label{expectedcost}
\end{equation}
where $S_Q=\{\omega \in \Omega \mid X(\omega) \in (0, Q) \}$, $S_Q^\prime=\mathcal{X}\setminus S_Q$ and $E_{\mathbb{G}}$ denotes expectation with respect to $\mathbb{G}$.

Differentiating Eq.~\ref{expectedcost} with respect to $Q$ using Leibnitz rule, we get the first order condition for the minimisation problem stated above as follows
\begin{eqnarray}
    & & \frac{\partial   E_{\mathbb{G}}[C_m(Q,X)]}{\partial Q}  =  0 \nonumber \\
    & \Rightarrow & {\int_{S_Q}  C_e(Q-X)^{m-1} d\mathbb{G}}  =  \int_{S_Q^\prime}{ C_s(X-Q)^{m-1}d\mathbb{G}} \nonumber \\
    & \Rightarrow & C_e \int_{S_Q} (Q-X)^{m-1} d\mathbb{G}  =  C_s \left[\int_\mathcal{X} (X-Q)^{m-1} d\mathbb{G} - \int_{S_Q} (X-Q)^{m-1} d\mathbb{G}\right] \nonumber \\
    & \Rightarrow &  \int_{S_Q} (Q-X)^{m-1} d\mathbb{G}  =  \frac{C_s}{\left[ C_e+C_s(-1)^{m-1}\right]} \int_{\mathbb{X}} (X-Q)^{m-1} d\mathbb{G} \nonumber  \\
    & \Rightarrow & {\frac{E_{\mathbb{G}}\left[ (Q-X)^{m-1}\mathbb{I}(S_Q)\right]}{E_{\mathbb{G}}[(X-Q)^{m-1}]}}  =  k_m
    \label{FOC}
\end{eqnarray}
where, $\mathbb{I}(S_Q)$ is an indicator function over the set $S_Q$ and $k_m = {\frac{C_s}{C_e+(-1)^{m-1}C_s}}$. Denoting ${\int_{S_Q} (Q-X)^{i} d\mathbb{G}} = {\theta}_{1,i}$ and $E(X-Q)^{i}= {\theta}_{2,i}$, ${\forall}~i = 1,2,\ldots$,  Eq.~\ref{FOC} can be written as
\begin{equation}
h(\vec{\theta},Q)={\frac{{\theta}_{1,m-1}}{{\theta}_{2,m-1}}} = k_m
\label{finalFOC}
\end{equation}
Let us define the $j^{th}$ partial raw moment of $X$ as $\delta_j=\int_{S_Q} X^j d\mathbb{G}$ and the $j^{th}$ raw moment of $X$ by ${\mu}_j^\prime=\int_\mathcal{X} X^j d\mathbb{G}$ $\forall j=1,2,\ldots $. Further let, the optimal expected cost be denoted by $\varphi_m^*$ and the corresponding set of optimal order quantities by $\mathcal{U^*}$, which are obtained by solving the population stochastic minimisation problem in  Eq.~\ref{sygennv}.
Next we show that $\mathcal{U^*}$ is non-empty, \emph{i.e.} at least one feasible solution to Eq.~\ref{finalFOC} exists. 

\begin{theorem}\label{rootexist}
 Consider the stochastic minimisation problem in a SyGen-NV set-up as follows,
\begin{equation}
    \underset{Q\in \mathcal{X}}{argmin}~ E_G[C_m(Q,X)]
    \label{sygennv}
\end{equation}
where $X$ is the positive demand defined over the probability space $(\Omega, \mathcal{F},\mathbb{P})$ and $Q$ is order quantity. Under the assumptions A1-A3 and 
\begin{itemize}
     \item[I.] if $m$ is even, then there will exist at least one positive solution to the stochastic minimisation problem provided $\beta_j = \binom{m-1}{j} [\delta_j- (-1)^{m-1}k_m \mu_j^\prime]$ are of the same sign for at least two consecutive $j$'s $(j \in \{0,1,\ldots ,m-1\})$.
     \item[II.] if $m$ is odd, then at least one positive solution to the stochastic minimisation problem will exist. 
\end{itemize}
\end{theorem}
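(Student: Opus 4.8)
The plan is to turn the first-order condition \eqref{finalFOC} into the vanishing of a single continuous function of $Q$ and to exhibit a positive zero. Set
\[
 g(Q)=C_e\int_{S_Q}(Q-X)^{m-1}\,d\mathbb{G}-C_s\int_{S_Q'}(X-Q)^{m-1}\,d\mathbb{G},
\]
so that $g(Q)=0$ is exactly \eqref{FOC}; splitting $S_Q'$ off $\mathcal{X}$ and using $(X-Q)^{m-1}=(-1)^{m-1}(Q-X)^{m-1}$ on $S_Q$ shows $g(Q)=[C_e+(-1)^{m-1}C_s]\,\psi(Q)$ with $\psi(Q)=\theta_{1,m-1}-k_m\theta_{2,m-1}$, and the bracket is nonzero because $C_e\neq C_s$. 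Under A2 the set $S_Q$ varies continuously with $Q$, so $g$ is continuous on the compact support $\mathcal{X}$, and $E_{\mathbb{G}}[C_m(Q,X)]$ attains its minimum there by the extreme value theorem. The whole task is to show this minimiser is interior, hence strictly positive, and solves $g=0$.

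The decisive input is the sign of $g$ at the two ends of $\mathcal{X}=[0,b]$, $b=\sup\mathcal{X}$. At $Q=0$ we have $S_0=\emptyset$, $S_0'=\mathcal{X}$, whence $g(0)=-C_s\mu_{m-1}'<0$ by A3 and $X^{m-1}\ge 0$; at $Q=b$ we have $S_b=\mathcal{X}$, $S_b'=\emptyset$, whence $g(b)=C_e\int_{\mathcal{X}}(b-X)^{m-1}\,d\mathbb{G}>0$ since $b-X\ge 0$ on the support and $G$ is non-degenerate. Thus $g$ changes sign on $(0,b)$, and the intermediate value theorem yields an interior stationary point; since the expected cost decreases at $0$ and increases at $b$, this point is the global minimiser and is positive. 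For odd $m$, $k_m=C_s/(C_e+C_s)\in(0,1)$ and no further hypothesis is needed, giving Part II.

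For even $m$ I would instead exploit the polynomial form of $\psi$. Expanding both powers binomially and collecting terms gives
\[
 \psi(Q)=\sum_{j=0}^{m-1}(-1)^{j}\beta_j\,Q^{\,m-1-j},\qquad \beta_j=\binom{m-1}{j}\bigl[\delta_j-(-1)^{m-1}k_m\mu_j'\bigr].
\]
Here $m-1$ is odd and $k_m=C_s/(C_e-C_s)$ may have either sign, so the endpoint signs above are no longer automatically opposite; instead one counts sign variations in the coefficient sequence $\{(-1)^j\beta_j\}_{j=0}^{m-1}$. The alternating factor $(-1)^j$ means that two consecutive $\beta_j$ of equal sign produce a sign change between the corresponding coefficients, and a Descartes-type sign-variation argument then forces a positive real root of $\psi$, which is precisely the stated sufficient condition; for $m=2$ it reduces to the transparent fact that $\beta_0Q=\beta_1$ has a positive root iff $\beta_0,\beta_1$ share a sign.

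The step I expect to be the main obstacle is that the $\beta_j$ are not true constants: through $\delta_j=\int_{S_Q}X^j\,d\mathbb{G}$ they depend on $Q$, so $\psi$ is a coefficient-varying (``random'') polynomial and Descartes' rule applies literally only to frozen coefficients. I would resolve this by using that each $\delta_j$ is continuous and monotone in $Q$ (A2), so the relevant sign pattern is locally stable and the root-existence conclusion survives; alternatively, the continuity/IVT argument of the second paragraph establishes existence without ever freezing the coefficients and so serves as a safe fallback. The only remaining verification is that the stationary point is a minimiser rather than a maximiser, which is immediate from $g(0)<0<g(b)$.
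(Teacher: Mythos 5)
Your proof is essentially correct, but it follows a genuinely different route from the paper's. The paper works entirely with the pseudo-polynomial form $\sum_{j}(-1)^{j}\beta_{j}Q^{m-1-j}$: for odd $m$ it examines the limit $Q\to 0$ (where $\beta_{2d}\to -k_{2d+1}\mu_{2d}'<0$) and then argues that for $Q$ beyond some $Q_{0}$ all $\beta_{j}$ are close to $\tau_{j}=\binom{2d}{j}\mu_j'(1-k_{2d+1})>0$ and the paired terms make the polynomial positive, invoking Bolzano; for even $m$ it applies Descartes' rule of signs to the coefficient sequence. Your argument instead evaluates the derivative $g$ of the expected cost at the two endpoints of the compact support and applies the intermediate value theorem directly to the integral form. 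This buys you two real advantages: (i) you avoid the paper's large-$Q$ approximation step, which is delicate because $Q_{0}=\max\{\tau_{2j+1}/\tau_{2j}\}$ need not lie inside the compact support $\mathcal{X}$, whereas your evaluation at $b=\sup\mathcal{X}$ is exact ($\delta_j=\mu_j'$ there); and (ii) you never need to freeze the $Q$-dependent coefficients $\beta_j(Q)$, which is precisely the weak point of any Descartes-type argument here (a weakness present in the paper's own proof of Part I and which you correctly flag).

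One observation you should push further: your endpoint computation $g(0)=-C_s\mu_{m-1}'<0$ and $g(b)=C_e\int_{\mathcal{X}}(b-X)^{m-1}\,d\mathbb{G}>0$ uses only $X\ge 0$ and $b-X\ge 0$ on the support, and is therefore valid for \emph{every} $m\ge 1$, even or odd. Your stated worry that for even $m$ ``the endpoint signs are no longer automatically opposite'' is misplaced: the sign ambiguity of $k_m$ and of the prefactor $C_e+(-1)^{m-1}C_s$ affects $\psi$ only through a single nonzero constant multiple of $g$, so $\psi$ changes sign on $(0,b)$ exactly when $g$ does, and $g$ always does. Consequently your IVT argument already establishes Part I unconditionally (the hypothesis on consecutive $\beta_j$ being merely sufficient, not necessary), and the retreat to the frozen-coefficient Descartes count — the only genuinely shaky step in your write-up — is unnecessary. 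The remaining hypotheses you implicitly use (non-degeneracy of $G$ at $0$ and at $b$, continuity of $Q\mapsto\delta_j(Q)$) are supplied by A2, so the argument closes.
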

\begin{proof}
From the first order condition in Eq.~\ref{finalFOC}, we notice that 

\begin{eqnarray} 
    & &\int_{S_Q} (Q-X)^{m-1} d\mathbb{G} =  k_m (-1)^{m-1} \int_\mathcal{X} (Q-X)^{m-1} d\mathbb{G}, ~(Q\in \mathcal{X}) \nonumber \\
    &\Rightarrow & \int_{S_Q} \sum_{j=0}^{m-1} \binom{m-1} {j} Q^{m-1-j} (-1)^{j} X^j d\mathbb{G}  =  k_m (-1)^{m-1} \int_\mathcal{X} (Q-X)^{m-1} d\mathbb{G} \nonumber \\
    &\Rightarrow & \sum_{j=0}^{m-1} \binom{m-1} {j} Q^{m-1-j} (-1)^{j} \left[  \int_{S_Q} X^j d\mathbb{G}  - k_m (-1)^{m-1} \int_\mathcal{X} X^j d\mathbb{G} \right] =  0.  \nonumber \\
    &\Rightarrow & \sum_{j=0}^{m-1} \binom{m-1} {j} Q^{m-1-j} (-1)^{j} [\delta_{j}-(-1)^{m-1}k_m{\mu}_{j}^\prime] = 0 \nonumber \\
    &\Rightarrow & \sum_{j=0}^{m-1} (-1)^{j} \beta_j Q^{m-1-j} = 0, \, where \; \beta_j= \binom{m-1} {j}[\delta_{j}-(-1)^{m-1}k_m{\mu}_{j}^\prime]
    \label{altfinalFOC}
\end{eqnarray}
\par If m is odd $(m=2d+1)$, then the polynomial is an even degree one. Observe that, in this case $0<k_m< 1$ and  $\beta_j = \binom{2d}{j} [\delta_j - k_{2d+1} \mu_j^\prime]$. Letting $Q \rightarrow 0$, it can be observed that, $\delta_{2d} \rightarrow 0$, resulting in $\displaystyle \lim_{Q\rightarrow 0} \beta_{2d}=-k_{2d+1}\mu_{2d}^\prime <0 $ so that $\displaystyle \lim_{Q\rightarrow 0} \sum_{j=0}^{2d} (-1)^{j} \beta_j Q^{2d-j} = \beta_{2d} < 0$ . 
\par On the other hand, it is possible to choose a large Q, say $Q_0$, so that $\delta_j \approx \mu_j^\prime,\; \forall j=0,1,\ldots 2d$, whenever $Q\geq Q_0$. In that case, $\beta_j \rightarrow \tau_j$, where, $\tau_j= \binom{2d}{j} \mu_j^\prime (1-k_{2d+1}) > 0,\; \forall j=0,1,\ldots 2d$. Choosing $\displaystyle Q_0=\max \left \{ \frac{\tau_{2j+1}}{\tau_{2j}}: j=0,1,\ldots d\right\}$, we, therefore, obtain
\begin{eqnarray}
    \sum_{j=0}^{2d} (-1)^j \tau_j Q^{2d-j}& = & \tau_0 Q^{2d}-\tau_1 Q^{2d-1}+\ldots+\tau_{2d-2} Q^2-\tau_{2d-1} Q +\tau_{2d} \nonumber\\
    & = & Q^{2d-1}(\tau_0 Q-\tau_1)+Q^{2d-3}(\tau_2 Q-\tau_3)+\ldots \nonumber \\
    & & +Q(\tau_{2d-2} Q-\tau_{2d-1})+ \tau_{2d} \nonumber \\
    & > & 0, \mbox{ for } Q>Q_0\nonumber
\end{eqnarray}
Thus, the polynomial in Eq.~\ref{altfinalFOC} is negative when $Q\rightarrow 0$ and is positive for large $Q$ (\emph{i.e.} $Q>Q_0$). Hence, presence of a positive solution of Eq.~\ref{altfinalFOC} follows from the well known Bolzano's theorem on zero of continuous functions.

\par If $m$ is even, then the polynomial in the left hand side of Eq.~\ref{altfinalFOC} is an odd degree polynomial. 
Hence, there would exist at least one real solution to the equation from Descarte's sign rule. In this case, $\beta_j =\binom{m-1}{j} \left[ \delta_j+k_m \mu_j^\prime\right]$. Further, if $k_m > 0$, then $\beta_j>0,\; \forall j$, which leads to $m-1$ sign changes in the consecutive terms of the polynomial. Thus, there would be at least one feasible solution to the stochastic minimisation problem  (\emph{i.e.} positive root to the polynomial). If $k_m < 0$, then either $\beta_j > 0$ or $\beta_j < 0$, for each $j=1,2,\ldots m-1$. If $\beta_j$'s are of same sign $\forall j$, then by the previous argument there will be at least one positive root of the polynomial. In this case, replacing $Q$ by $-Q$ in the above polynomial, no sign change would occur between consecutive terms. Hence, the real roots would all be positive. On the other hand, if all the $\beta_j$'s are not of same sign, then it is required that at least one $j ~(\in \{0,1,\ldots, m-2\})$ exists such that $\beta_j$ and $\beta_{j+1}$ are of the same sign, so that there would exist a positive root of the polynomial.   
Since there could be many positive roots, we select the one with maximum magnitude. 
\end{proof}

\section{Non-parametric optimal order quantity estimation in SyGen-NV}
In this section, we present non-parametric estimation of the optimal order quantity, when the demand distribution is completely unknown, but historical uncensored demand data are available. Let us denote an uncensored random sample of size $n$  by $\vec{X}=(X_1,X_2,...,X_n)^\prime$ drawn from $\mathbb{G}$. We define two statistics $T_{in}(\vec{X}):\mathbb{R^+}^n\rightarrow \mathbb{R^+}, ~(i=1,2)$ as $\displaystyle T_{1n}=\frac{1}{n}\sum_{i=1}^n (Q-X_i)^{m-1}\mathbb{I}(X_i\leq Q)$ and $T_{2n}=\displaystyle \frac{1}{n}\sum_{i=1}^n (X_i-Q)^{m-1}$. 
Then the sample version of the first order condition in Eq.~\ref{sygennv} can be constructed by replacing $\theta_{i,m-1}$ with corresponding $T_{in}$, ~i=1,2. The estimating equation can be written as  

\begin{align}
       & \label{esteqn} h(\vec{T}_n;Q)   =  \frac{T_{1n}}{T_{2n}}=k_m
\end{align}
Further, we define sample partial and complete raw moments of order $j$ as $d_j\displaystyle =\frac{1}{n}\sum_{i=1}^n X_i^jI(X_i\leq Q)$ and $\displaystyle m_j^\prime=\frac{1}{n}\sum_{i=1}^n X_i^j$. It can be easily observed that the sample raw moments $d_j$ and $m_j^\prime$ are unbiased estimators of $\delta_j$ and $\mu_j^\prime$. Hence, $\hat{\beta}_j=\binom{m-1}{j}[d_j-(-1)^{m-1} k_m m_j^\prime]$ is the unbiased estimator of $\beta_j$. We then construct the sample version of the first order condition provided in Eq.~\ref{altfinalFOC} as  
\begin{align}
    \label{altesteqn} & \sum_{j=0}^{m-1} (-1)^j\hat{\beta}_j Q^{m-1-j} = 0
\end{align}
where $\hat{\beta}_j$ is as defined above. We would refer to $h(\vec{T}_n;Q)$ as estimating function and the polynomial in the alternative form of the first order condition in Eq.~\ref{altesteqn} as the random polynomial estimating function or simply random polynomial.

\subsection{Properties of $\underset{\sim}{T_n}$}
Some important properties of $T_{in},~i=1,2$ are as follows.
\begin{enumerate}
    \item[P1.] $T_{i,n}$ is unbiased for $\theta_{i,m-1}$, $i=1,2$.
    \item[P2.] $T_{i,n}\stackrel{a.s.}{\rightarrow}\theta_{i,m-1}$ as $n\rightarrow \infty$
    \item[P3.] $\sqrt{n}(T_{in}-\theta_{i,m-1})\stackrel{\mathcal{L}}{\rightarrow}N(0,\sigma_{i,n}^2)$, where $n\sigma_{i,n}^2=\theta_{i,2m-2}-\theta_{i,m-1}^2, ~i=1,2$ and the symbol $\stackrel{\mathcal{L}}{\rightarrow}$ stands for convergence in distribution.
\end{enumerate}
Proof of P1 is immediate by taking expectation of $T_{i,n}$. P2 follows from Kolmogorov's strong law of large number \cite[see pp-115][]{rao1973linear} and the fact that each of $T_{i,n}, ~i=1,2$ is an average of independently and identically distributed (iid) random variables satisfying existence of variance by assumption A3 stated above. P3 is also straight forward from Lindeberg-Levy central limit theorem for iid samples \cite{rao1973linear}.



\subsection{Properties of $h ( \underset{\sim}{T_n} ; Q)$}
We begin with the statement of the following properties of $h(\vec{T}_n;Q)$. 
\begin{enumerate}
    \item[P4] $h(\vec{T}_n;Q)$ is a measurable function over $({\mathbb{R}^+}^n,\mathcal{B}_n)$ for every $Q\in \mathcal{X}$.
    \item[P5] $h(\vec{T}_n;Q)$ is continuously differentiable with respect to $Q$ within the compact set $\mathcal{X}$ a.e $\mathcal{B}_n$.
\end{enumerate}
Property P4 of $h(\vec{T}_n;Q)$ is straight forward from the fact that it is a ratio of two measurable functions (\emph{viz.} polynomials) for every $Q\in \mathcal{X}$. The next property follows from the facts that $T_{1n}$ and $T_{2n}$ are positive $a.e~{\mathbb{R}^+}^n$ for every $Q\in \mathcal{X}$ and ratio of non-zero polynomials are differentiable. 

In what follows, we provide the asymptotic distribution of the random function $h(\vec{T}_n;Q)$ for every $Q\in \mathcal{X}$. First we state an important result, called the delta method for asymptotic normality of a one time differentiable function.
\begin{theorem}[\textbf{Delta Method \cite{dasgupta2008asymptotic}}]\label{deltamethod}

    Suppose ${\vec{W}_n}$ is a sequence of $k$-dimensional random vectors such that $\sqrt{n}(\vec{W}_n-\vec{\theta}) ~\stackrel{\mathcal{L}}{\rightarrow} ~ N_k(\vec{0},\Sigma)$. Let $g : \mathbb{R}_k\rightarrow \mathbb{R}$ be once differentiable at $\theta$ with the gradient vector $g^{(1)}(\theta)$. Then
    \begin{equation}
        \sqrt{n}(g(\vec{W}_n)-g(\vec{\theta})) \stackrel{\mathcal{L}}{\rightarrow} N(0, {g^{(1)}}^\prime(\theta)\Sigma g^{(1)}(\theta) )
    \end{equation}
\end{theorem}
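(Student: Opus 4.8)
The plan is to establish the result by a first-order linearisation of $g$ about $\theta$ followed by an application of Slutsky's theorem. The underlying idea is that, since $\vec{W}_n$ concentrates around $\theta$ at rate $1/\sqrt{n}$, the function $g$ behaves locally like its affine approximation $g(\vec{\theta})+{g^{(1)}}^\prime(\theta)(\vec{W}_n-\vec{\theta})$, and the linear part transports the Gaussian limit of $\sqrt{n}(\vec{W}_n-\vec{\theta})$ through the fixed gradient vector $g^{(1)}(\theta)$, producing exactly the stated sandwich variance.

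First I would note that convergence in distribution of $\sqrt{n}(\vec{W}_n-\vec{\theta})$ to a (tight) Gaussian limit forces the sequence to be bounded in probability, so that $\sqrt{n}\norm{\vec{W}_n-\vec{\theta}}=O_P(1)$ and hence $\vec{W}_n\stackrel{P}{\to}\vec{\theta}$ in probability. Next, because $g$ is differentiable at $\theta$, I would use the very definition of differentiability to write, for $\vec{w}$ near $\vec{\theta}$,
\[
g(\vec{w})=g(\vec{\theta})+{g^{(1)}}^\prime(\theta)(\vec{w}-\vec{\theta})+r(\vec{w}),
\qquad \frac{r(\vec{w})}{\norm{\vec{w}-\vec{\theta}}}\to 0 \text{ as } \vec{w}\to\vec{\theta}.
\]
Substituting $\vec{w}=\vec{W}_n$ and multiplying by $\sqrt{n}$ yields
\[
\sqrt{n}\big(g(\vec{W}_n)-g(\vec{\theta})\big)={g^{(1)}}^\prime(\theta)\,\sqrt{n}(\vec{W}_n-\vec{\theta})+\sqrt{n}\,r(\vec{W}_n).
\]
The first summand is a fixed linear functional of $\sqrt{n}(\vec{W}_n-\vec{\theta})$, so by the continuous mapping theorem it converges in law to $N\!\big(0,{g^{(1)}}^\prime(\theta)\Sigma g^{(1)}(\theta)\big)$, the asserted limiting law.

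The main obstacle, and the only delicate point, is showing that the remainder satisfies $\sqrt{n}\,r(\vec{W}_n)\stackrel{P}{\to}0$; this is made subtle by the hypothesis that $g$ is differentiable \emph{only} at $\theta$ rather than on a neighbourhood, so the mean value theorem is unavailable and I must argue directly from the little-$o$ bound above. I would factor
\[
\sqrt{n}\,r(\vec{W}_n)=\Big(\sqrt{n}\,\norm{\vec{W}_n-\vec{\theta}}\Big)\cdot\frac{r(\vec{W}_n)}{\norm{\vec{W}_n-\vec{\theta}}},
\]
setting the second factor equal to $0$ on the event $\{\vec{W}_n=\vec{\theta}\}$ to avoid division by zero. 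The first factor is $O_P(1)$ by the boundedness noted above, while the second is $o_P(1)$: since $\vec{W}_n\stackrel{P}{\to}\vec{\theta}$, the little-$o$ property ensures that for any $\varepsilon>0$ the ratio can exceed $\varepsilon$ only when $\vec{W}_n$ lies outside a shrinking ball about $\theta$, an event of vanishing probability. As the product of an $O_P(1)$ term and an $o_P(1)$ term is $o_P(1)$, the remainder is asymptotically negligible. Finally I would combine the two summands via Slutsky's theorem, so that $\sqrt{n}\big(g(\vec{W}_n)-g(\vec{\theta})\big)$ inherits the Gaussian limit of the linear term, completing the argument.
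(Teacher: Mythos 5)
Your proof is correct. Note that the paper itself offers no proof of this statement: it is quoted as a known result with a citation to Dasgupta's text, so there is no in-paper argument to compare against. What you have written is the standard (and rigorous) textbook proof — linearise $g$ at $\theta$ using only the definition of differentiability at the single point, observe that tightness of $\sqrt{n}(\vec{W}_n-\vec{\theta})$ gives $\vec{W}_n\stackrel{P}{\to}\vec{\theta}$, split off the remainder, and kill it as an $O_P(1)\cdot o_P(1)$ product before invoking Slutsky and the continuous mapping theorem. You correctly identify and handle the one delicate point, namely that differentiability is assumed only at $\theta$, so the mean value theorem is unavailable and the little-$o$ bound must be used directly, with the ratio defined to be zero on the event $\{\vec{W}_n=\vec{\theta}\}$. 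No gaps.
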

We now prove the asymptotic normality of $h(\vec{T}_n;Q)$ in the following theorem.
\begin{theorem} 
Consider the estimating function $h(\vec{T}_n;Q)$ in Eq.~\ref{esteqn}. Then for large $n$
\begin{equation}
    \sqrt{n}(h(\vec{T}_{n};Q)-h(\vec{\theta};Q))~ {\overset{\mathcal{L}}{\to}}~ N\left(0,{\vec{h}^{(1)}}^\prime ~\Sigma~\vec{h}^{(1)} \right) \;
\end{equation}
where $\Sigma$ is the dispersion matrix of $\vec{T}_n$, $\vec{h}^{(1)}$ is the $1^{st}$ vector derivative of $h(\vec{T}_n;Q)$ with respect to $\vec{T}_n$ evaluated at $\vec{\theta}$ and  \[{\vec{h}^{(1)}}^\prime ~\Sigma~\vec{h}^{(1)}= h(\vec{\theta};Q)^2\left[ {\frac{\theta_{1,2m-2}}{\theta_{1,m-1}^2}}+\frac{\theta_{2,2m-2}}{\theta_{2,m-1}^2}+2(-1)^{m}\frac{\theta_{1,2m-2}}{\theta_{1,m-1}\theta_{2,m-1}}\right]\]
\end{theorem}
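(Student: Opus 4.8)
The plan is to recognize the estimating function as a smooth transformation of a vector of sample means and to invoke the Delta Method (Theorem~\ref{deltamethod}). First I would write $\vec{T}_n=(T_{1n},T_{2n})^\prime$ as the sample mean $\frac{1}{n}\sum_{i=1}^n \vec{U}_i$ of the iid random vectors $\vec{U}_i=(U_{1i},U_{2i})^\prime$, where $U_{1i}=(Q-X_i)^{m-1}\mathbb{I}(X_i\le Q)$ and $U_{2i}=(X_i-Q)^{m-1}$. Assumption A3 guarantees that both coordinates have finite second moments for every fixed $Q\in\mathcal{X}$, so the multivariate Lindeberg-Levy central limit theorem gives $\sqrt{n}(\vec{T}_n-\vec{\theta})\overset{\mathcal{L}}{\to}N_2(\vec{0},\Sigma)$, where $\Sigma=\mathrm{Cov}(\vec{U}_1)$ is the common per-observation dispersion matrix. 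This is exactly the hypothesis required to apply Theorem~\ref{deltamethod} with $k=2$.

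Next I would evaluate the three entries of $\Sigma$ in terms of the partial and complete moment parameters. Since $[(Q-X)^{m-1}\mathbb{I}(X\le Q)]^2=(Q-X)^{2m-2}\mathbb{I}(X\le Q)$ and $[(X-Q)^{m-1}]^2=(X-Q)^{2m-2}$, the diagonal entries are $\Sigma_{11}=\theta_{1,2m-2}-\theta_{1,m-1}^2$ and $\Sigma_{22}=\theta_{2,2m-2}-\theta_{2,m-1}^2$, consistent with the variances already recorded in property P3. For the off-diagonal entry I would use the sign identity $(X-Q)^{m-1}=(-1)^{m-1}(Q-X)^{m-1}$, so that the product $U_{1i}U_{2i}$ collapses to $(-1)^{m-1}(Q-X)^{2m-2}\mathbb{I}(X\le Q)$, whence $\Sigma_{12}=(-1)^{m-1}\theta_{1,2m-2}-\theta_{1,m-1}\theta_{2,m-1}$. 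Tracking this factor $(-1)^{m-1}$ correctly is where care is needed, since it is what ultimately produces the $(-1)^m$ sign in the stated cross term.

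I would then apply the Delta Method to $g(w_1,w_2)=w_1/w_2$, whose gradient at $\vec{\theta}$ is $\vec{h}^{(1)}=(1/\theta_{2,m-1},\,-\theta_{1,m-1}/\theta_{2,m-1}^2)^\prime$. Differentiability at $\vec{\theta}$ requires $\theta_{2,m-1}\neq 0$, which holds because $T_{2n}$ is positive a.e.\ (property P5); the theorem then delivers the limiting variance as the quadratic form ${\vec{h}^{(1)}}^\prime\,\Sigma\,\vec{h}^{(1)}$.

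The final and most delicate step is the algebraic reduction of this quadratic form to the claimed closed form. Expanding
\[
{\vec{h}^{(1)}}^\prime\,\Sigma\,\vec{h}^{(1)}=\frac{\Sigma_{11}}{\theta_{2,m-1}^2}+\frac{\theta_{1,m-1}^2\,\Sigma_{22}}{\theta_{2,m-1}^4}-\frac{2\,\theta_{1,m-1}\,\Sigma_{12}}{\theta_{2,m-1}^3}
\]
and substituting the three entries, I would factor out $h(\vec{\theta};Q)^2=\theta_{1,m-1}^2/\theta_{2,m-1}^2$. The terms arising from the squared means cancel in pairs: the two $-\theta_{1,m-1}^2/\theta_{2,m-1}^2$ contributions coming from $\Sigma_{11}$ and $\Sigma_{22}$ are offset by the $+2\,\theta_{1,m-1}^2/\theta_{2,m-1}^2$ contribution from the $-\theta_{1,m-1}\theta_{2,m-1}$ piece of $\Sigma_{12}$, leaving only the three second-moment terms, with the leading minus sign converting $(-1)^{m-1}$ into $(-1)^m$. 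Verifying this cancellation and sign flip is the main obstacle; the remainder is routine bookkeeping.
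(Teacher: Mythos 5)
Your proposal is correct and follows essentially the same route as the paper: a bivariate CLT for $\vec{T}_n$ (with the cross-covariance $(-1)^{m-1}\theta_{1,2m-2}-\theta_{1,m-1}\theta_{2,m-1}$ computed exactly as in the paper), the delta method applied to $g(w_1,w_2)=w_1/w_2$ with the same gradient, and the same cancellation of the squared-mean terms producing the $(-1)^m$ sign. The only difference is the harmless normalization choice of working with the per-observation covariance matrix rather than the paper's $\sigma_{ij;n}$, which already carry the factor $1/n$.
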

\begin{proof}
The co-variance between $T_{1n}$ and $T_{2n}$ is
\begin{eqnarray}
    \sigma_{12;n}&=&Cov({T_{1n},T_{2n}}) \nonumber \\
    &=&  Cov\left({\frac{1}{n}}{\sum_{i=1}^{n}}(Q-X_i)^{m-1}I(X_i \leq Q),{\frac{1}{n}}{\sum_{i=1}^{n}}(X_i-Q)^{m-1}\right) \nonumber \\
    & = & {\frac{1}{n^2}}{\sum_{i=1}^{n}}Cov((Q-X_i)^{m-1}I(X_i \leq Q),(X_i-Q)^{m-1}) \nonumber \\
    &=& {\frac{1}{n}}\left[(-1)^{m-1}{\theta}_{1,2m-2}-{\theta}_{1,m-1}{\theta}_{2,m-1}\right]
    \label{covT1T2}
\end{eqnarray}
From the property P3 and Eq.~\ref{covT1T2}, it could be easily seen that $\sqrt{n}\left(\vec{T}_n-\vec{\theta}\right)$ is asymptotically multivariate normal with dispersion matrix $\Sigma=((\sigma_{ij;n})), \; i,j=1,2$ and $\sigma_{ii;n}=\sigma_{i,n}^2$. Also, note that $T_{in}>0\; a.e. \; {\mathbb{R}^+}^n,\; i=1,2$ and $h(\vec{T}_{n};Q)$ is once differentiable for every $Q\in\mathcal{X}$. We denote the $1^{st}$ derivative of $\vec{h}(\vec{T}_n;Q)$ by $\vec{h}^{(1)}=(h^1(\vec{\theta};Q), h^2(\vec{\theta};Q))^\prime =\left(\frac{1}{\theta_{2,m-1}},\; -\frac{\theta_{1,m-1}}{\theta_{2,m-1}^2} \right)^\prime$, where $h^i(\vec{\theta};Q)=\left. \frac{\partial h(\vec{T}_n;Q)}{\partial T_{in}}\right|_{\vec{T}_n=\vec{\theta}}$ for $i=1,2$. Thus, using routine algebra it can be easily shown that 
\begin{eqnarray*}
{\vec{h}^{(1)}}^\prime~\Sigma~{\vec{h}^{(1)}} & = & h(\vec{\theta};Q)^2 \left[{\frac{\sigma_{1n}^2}{{\theta}_{1,m-1}^2}}+{\frac{\sigma_{2n}^2}{{\theta}_{2,m-1}^2}}-2{\frac{\sigma_{12,n}}{\theta_{1,{m-1}}\theta_{2,m-1}}}\right] \\
& = & \frac{h(\vec{\theta};Q)^2}{n} \left[\frac{\theta_{1,2m-2}}{\theta_{1,m-1}^2}+\frac{\theta_{2,2m-2}}{\theta_{2,m-1}^2}+2(-1)^m \frac{\theta_{1,2m-2}}{\theta_{1,m-1}\theta_{2,m-1}} \right]
\end{eqnarray*}
The proof of the theorem is then immediate from the delta method (Th. \ref{deltamethod}).

\end{proof}
    

\subsection{Solution of the estimating equation}

 In this section we present the statistical properties of the estimated optimal order quantity and the optimal value function. We denote by $\hat{\varphi}^*_m$ the estimated optimal cost function and the corresponding set of estimated optimal order quantities are denoted by $\hat{\mathcal{U}}^*$. In the following theorem we prove that $\hat{\mathcal{U}}^*$ is non-empty with probability ($wp$) 1, \emph{i.e} there exists at least one positive solution to Eq.~\ref{esteqn} $wp~ 1$.

\begin{theorem}
Under the regularity assumptions $A1 -A3$, the random polynomial $\displaystyle \sum_{j=0}^{m-1} (-1)^j\hat{\beta}_j Q^{m-1-j}$ will have positive zeroes $wp$ 1 in the following cases.
\begin{itemize}
     \item[I.] For even $m$, if at least two consecutive $\hat{\beta}_j$'s $(j \in \{0,1,\ldots ,m-1\})$ are of the same sign \emph{wp} 1, then at least one positive solution will exist.
     \item[II.] For odd m, at least one positive solution exists \emph{wp} 1. 
\end{itemize}
where $\hat{\beta}_j=d_j- (-1)^{m-1} k_m{m_j}^\prime,\; \forall~j=1,2 \ldots m-1$.
\end{theorem}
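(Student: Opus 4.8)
The plan is to transfer the deterministic argument of Theorem~\ref{rootexist} to the random polynomial of Eq.~\ref{altesteqn} by simply replacing the population quantities $\delta_j,\mu_j'$ (hence $\beta_j$) with their empirical counterparts $d_j,m_j'$ (hence $\hat\beta_j$), and then verifying that every sign and limit statement used there now holds on a single event of probability one. The algebraic skeleton is identical: Eq.~\ref{altesteqn} is the exact sample analogue of Eq.~\ref{altfinalFOC}, so the even/odd $m$ dichotomy splits in precisely the same way, and the only genuinely new ingredient is probabilistic, coming from assumption A2 (continuity of $G$) together with compactness of $\mathcal{X}$.

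First I would record the three almost-sure facts on which everything rests. Since $G$ is continuous, $\mathbb{P}(X_i=0)=0$, so on an event $E_0$ with $\mathbb{P}(E_0)=1$ we have $X_i>0$ for every $i$; consequently, for $Q$ smaller than $\min_i X_i$ each indicator $\mathbb{I}(X_i\le Q)$ vanishes and hence $d_j=\tfrac1n\sum_i X_i^j\mathbb{I}(X_i\le Q)\to 0$ for all $j$ as $Q\to 0^+$. Second, $m_0'=1$ and, on $E_0$, $m_j'=\tfrac1n\sum_i X_i^j>0$ for every $j\ge 0$. Third, compactness of $\mathcal{X}$ gives $X_{(n)}=\max_i X_i<\infty$, so for every $Q\ge X_{(n)}$ all indicators equal one, whence $d_j=m_j'$ and therefore $\hat\beta_j=\binom{m-1}{j}m_j'\bigl(1-(-1)^{m-1}k_m\bigr)$. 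All of these statements hold simultaneously on the single event $E_0$ of probability one.

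For odd $m$ $(m=2d+1)$ I would reproduce the Bolzano argument of Theorem~\ref{rootexist} verbatim, but restricted to $E_0$. Here $0<k_{2d+1}<1$, so letting $Q\to 0^+$ the constant term satisfies $\hat\beta_{2d}\to -k_{2d+1}m_{2d}'<0$, making the random polynomial negative near $0$ \emph{wp}~1; while for $Q\ge X_{(n)}$ the coefficients collapse to $\hat\tau_j=\binom{2d}{j}m_j'(1-k_{2d+1})>0$, and the same pairing of consecutive terms with the choice $Q_0=\max_j\{\hat\tau_{2j+1}/\hat\tau_{2j}\}$ shows positivity for $Q>\max(Q_0,X_{(n)})$. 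Continuity of the polynomial and the intermediate value theorem then yield a positive root on $E_0$, i.e.\ \emph{wp}~1. For even $m$ the polynomial has odd degree $m-1$, so a real root always exists; invoking the assumed hypothesis that at least two consecutive $\hat\beta_j$ share a sign \emph{wp}~1, I would run exactly the Descartes sign-count and sign-localisation step of Theorem~\ref{rootexist} (the substitution $Q\mapsto -Q$ excluding negative roots when the $\hat\beta_j$ are of one sign) to conclude that a positive root exists \emph{wp}~1.

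The step I expect to be the main obstacle is not the algebra, which is inherited wholesale from Theorem~\ref{rootexist}, but the careful bookkeeping of the ``with probability one'' qualifier: one must guarantee that the behaviour near $Q=0$, the collapse $d_j=m_j'$ for large $Q$, and the strict positivity of every $m_j'$ all occur on the \emph{same} event, rather than on distinct full-measure events depending on $Q$ or on $j$. Isolating $E_0=\{X_i>0\ \forall i\}$ once and for all (which also yields $X_{(n)}<\infty$ by compactness), and then observing that the remaining sign and limit assertions are pathwise consequences valid for every $Q$ on this one event, is precisely what upgrades the deterministic conclusion of Theorem~\ref{rootexist} to the claimed \emph{wp}~1 statement.
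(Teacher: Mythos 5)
Your proposal is correct, but it takes a genuinely different route from the paper's. The paper's own proof is a two-line appeal to the strong law of large numbers: $d_j \overset{a.s.}{\to} \delta_j$ and $m_j^\prime \overset{a.s.}{\to} \mu_j^\prime$ as $n\to\infty$, hence $\hat{\beta}_j \overset{a.s.}{\to} \beta_j$, and ``the proof is the same as that of Theorem~\ref{rootexist} in the almost sure sense.'' That argument is intrinsically asymptotic: it shows that the sign pattern exploited in the deterministic proof is eventually reproduced by the empirical coefficients, for all $n$ beyond some random threshold. Your argument is instead a finite-sample, pathwise one: on the single full-measure event $\{X_i>0\ \forall i\}$ (probability one by A2), the empirical polynomial itself is negative near $Q=0$, because the partial moments $d_j$ vanish identically once $Q$ drops below the sample minimum, and positive beyond the sample maximum, where $d_j=m_j^\prime$ collapses the coefficients to $\binom{m-1}{j}m_j^\prime(1-k_m)>0$; Bolzano then applies for every fixed $n$, not just in the limit. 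This is arguably a stronger and more faithful reading of the ``$wp$ 1'' claim, since the theorem statement carries no large-$n$ qualifier, and it sidesteps a slight awkwardness in case I, where the hypothesis is already a condition on the $\hat{\beta}_j$ themselves rather than on their limits, making the SLLN step redundant there. Your emphasis on isolating one common event on which all the sign and limit assertions hold simultaneously is exactly the bookkeeping the paper's one-liner glosses over. Both routes inherit the same loose ends from Theorem~\ref{rootexist} itself (e.g., whether the large-$Q$ crossing point lies inside the compact support $\mathcal{X}$, and the informal Descartes step for even $m$), but those are not gaps you introduced.
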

\begin{proof}
 Notice that, $d_j~ \overset{a.s}{\to}~ \delta_j$ and $m_j^\prime~ \overset{a.s}{\to} ~\mu_j^\prime$, which implies in turn that $\hat{\beta}_j~\overset{a.s}{\to} ~\beta_j$. Thus the proof of this theorem is same as that of Th.~\ref{rootexist} in almost sure sense. We omit the details to avoid repetition. 
 \end{proof}

 Next we show that any solution to the estimating equation converges to the true optimal order quantity in SyGen-NV problem. Let the solution of the estimating equation Eq.~\ref{esteqn} (or Eq.~\ref{altesteqn}) be denoted by $\hat{Q}_n^*$. We show that the solution is strongly consistent for the solution to the stochastic optimisation problem $\underset{Q\in\mathcal{X}}{argmin}~ E_{\mathbb{G}}\left[ C_m(Q,X)\right]$ under mild regularity conditions. First we state the following theorem without proof on existence of optima of a continuous function on a compact set.

\begin{theorem}[\textbf{Extreme Value Theorem \citep[see][]{stein2010complex}}]\label{extremevalue}
A continuous function on a compact set $\mathcal{X}$ is bounded and attains a maximum and minimum on $\mathcal{X}$.
\end{theorem}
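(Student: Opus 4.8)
The plan is to prove the two claims of the theorem, namely that $f$ is \emph{bounded} and that it \emph{attains} its supremum and infimum on $\mathcal{X}$, using the sequential compactness of $\mathcal{X}$ as the single underlying tool. Since $\mathcal{X}$ is a compact subset of $\mathbb{R}^+$, the Heine--Borel theorem gives that it is closed and bounded, and hence every sequence in $\mathcal{X}$ possesses a subsequence converging to a point of $\mathcal{X}$. An even shorter route would be to note that the continuous image $f(\mathcal{X})$ is itself compact, therefore closed and bounded in $\mathbb{R}$, so that it contains its own supremum and infimum; I sketch the sequential argument below as it is more self-contained.

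First I would establish boundedness by contradiction. Suppose $f$ were unbounded above on $\mathcal{X}$; then for each $n\in\mathbb{N}$ there exists $x_n\in\mathcal{X}$ with $f(x_n)>n$. Sequential compactness yields a subsequence $x_{n_k}\to x^\ast\in\mathcal{X}$, and continuity forces $f(x_{n_k})\to f(x^\ast)$, a finite value, contradicting $f(x_{n_k})>n_k\to\infty$. Replacing $f$ by $-f$ rules out unboundedness below, so $f$ is bounded on $\mathcal{X}$.

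Next I would show the supremum is attained. By boundedness and the completeness of $\mathbb{R}$, the value $M=\sup_{x\in\mathcal{X}}f(x)$ is finite, and by the definition of supremum there is a sequence $y_n\in\mathcal{X}$ with $f(y_n)\to M$. Extracting a convergent subsequence $y_{n_k}\to x_0\in\mathcal{X}$ and passing to the limit through $f$ by continuity gives $f(x_0)=\lim_k f(y_{n_k})=M$, so the maximum is attained at $x_0$. Applying the same reasoning to $-f$ produces a point at which $f$ attains its minimum, completing the proof.

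Because this is a classical result, I do not expect a genuine obstacle; the only steps requiring care are the extraction of the convergent subsequence and the interchange of limit and function justified by continuity. The relevant subtlety for the paper lies not in the theorem itself but in its application: the result is to be invoked on $E_{\mathbb{G}}[C_m(Q,X)]$ viewed as a function of $Q$ over the compact $\mathcal{X}$, so one must first verify that this expected-cost function is continuous in $Q$---which follows from assumptions A1--A3 together with dominated convergence---before the Extreme Value Theorem can be applied to guarantee an optimal order quantity.
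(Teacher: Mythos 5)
The paper states this result without proof, citing it as a classical theorem, so there is no in-paper argument to compare against. Your sequential-compactness proof is the standard one and is correct: boundedness by contradiction via a subsequence, then attainment of the supremum via a maximizing sequence, with the $-f$ trick handling the infimum; the alternative you mention (the continuous image of a compact set is compact, hence closed and bounded, hence contains its sup and inf) is equally valid and shorter. Your closing remark is also well taken --- the substantive issue in the paper is not the theorem itself but its application in Theorem~\ref{extvth}, where one must know that $\varphi_m(Q)=E_{\mathbb{G}}[C_m(Q,X)]$ is continuous in $Q$ (which the paper asserts and which follows from A1--A3 and dominated convergence) before the Extreme Value Theorem can be invoked on $\bar{O}$.
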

We state the next lemma on the compactness of the complement of an open subset of a compact set. 
\begin{lemma}\label{compactlem}
Let $\mathcal{X}$ be a compact set and $O$ be an open subset of $\mathcal{X}$. Then $\Bar{O}= \mathcal{X} \setminus O $, denoting the complement of $O$ in $\mathcal{X}$, is also a compact set.
\end{lemma}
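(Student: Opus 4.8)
```latex
The plan is to prove \autoref{compactlem} by reducing the statement to the standard characterisation of compactness in terms of closed and bounded sets, which is applicable since $\mathcal{X}$ lives inside $\mathcal{R}^+$ (indeed the whole paper works with the compact support $\mathcal{X}\subseteq\mathcal{R}^+$). Since $\mathcal{X}$ is compact, by the Heine--Borel theorem it is both closed and bounded. The strategy is to show that $\Bar{O}=\mathcal{X}\setminus O$ inherits both of these properties from $\mathcal{X}$, and then invoke Heine--Borel in the reverse direction to conclude compactness.

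First I would establish boundedness, which is the easy half: $\Bar{O}$ is a subset of $\mathcal{X}$, and any subset of a bounded set is bounded, so $\Bar{O}$ is bounded immediately. Next I would establish closedness. Here the key observation is that $O$ is open \emph{as a subset of $\mathcal{X}$} (i.e.\ in the subspace topology), so there exists an open set $U$ in $\mathcal{R}^+$ with $O=U\cap\mathcal{X}$. Then
\begin{equation}
    \Bar{O}=\mathcal{X}\setminus O=\mathcal{X}\setminus(U\cap\mathcal{X})=\mathcal{X}\cap(U\cap\mathcal{X})^c=\mathcal{X}\cap U^c,
\end{equation}
which exhibits $\Bar{O}$ as the intersection of two closed sets (the closed set $\mathcal{X}$ and the closed complement $U^c$ of the open set $U$), hence $\Bar{O}$ is closed. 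Having shown $\Bar{O}$ is both closed and bounded, I would apply the Heine--Borel theorem once more to conclude that $\Bar{O}$ is compact, completing the proof.

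The main obstacle, such as it is, is a point of care rather than of difficulty: one must be precise about \emph{relative} versus \emph{absolute} openness and closedness. The term ``open subset of $\mathcal{X}$'' most naturally means open in the subspace topology, and the same for closedness, so the cleanest argument stays entirely within the ambient space $\mathcal{R}^+$ by writing $O=U\cap\mathcal{X}$ for an ambient-open $U$. If instead one interprets everything relatively, the result is even more transparent, since the complement within $\mathcal{X}$ of a relatively open set is by definition relatively closed, and a closed subset of a compact set is compact. Either way the substantive content is the single fact that a closed subset of a compact set is compact; I would therefore present the closed-and-bounded route as the primary argument, since it is self-contained given the Heine--Borel characterisation and matches the metric-space setting $\mathcal{R}^+$ in which the rest of the paper operates.
```
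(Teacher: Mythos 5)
Your argument is correct. Note that the paper itself offers no proof to compare against: it explicitly dismisses the lemma as ``a routine exercise in real analysis'' and omits the argument. Your Heine--Borel route (boundedness inherited from $\mathcal{X}$; closedness via $\Bar{O}=\mathcal{X}\cap U^{c}$ for an ambient-open $U$ with $O=U\cap\mathcal{X}$) is sound, and you correctly identify in your final paragraph the shortest version of the argument --- the complement in $\mathcal{X}$ of a relatively open set is relatively closed, and a closed subset of a compact set is compact --- which is presumably the ``routine exercise'' the authors had in mind and has the advantage of not depending on the metric structure of $\mathcal{R}^{+}$ at all. One small point of care in your primary route: if $U^{c}$ is read as the complement of $U$ in $\mathcal{R}^{+}$, it need not be closed in $\mathbb{R}$; but since $\mathcal{X}$ is closed in $\mathbb{R}$ (compact sets in Hausdorff spaces are closed) and $\Bar{O}$ is closed in $\mathcal{X}$, the set $\Bar{O}$ is still closed in $\mathbb{R}$, so the conclusion stands.
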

The proof is a routine exercise in real analysis and hence is omitted. 

\begin{theorem}\label{extvth}
Let $\hat{Q}_n^*\in \mathcal{X}$ be the unique solution to the estimating equation $h(\vec{T}_n;Q)=k_m$ and $Q^*$ uniquely solves the stochastic programming problem \[\underset{Q\in \mathcal{X}}{argmin}~E_{\mathbb{G}}\left[C_m(Q,X)\right]\] Then
\begin{equation}
    \hat{Q}_n^*\overset{a.s.}{\to}Q^*
\end{equation}
\end{theorem}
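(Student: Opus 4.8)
The plan is to recast this as the standard consistency argument for a root of an estimating equation. Since the equation $h(\vec{T}_n;Q)=k_m$ is equivalent (wherever $T_{2n}\neq 0$) to the vanishing of the ``cleared'' function $\Psi_n(Q):=T_{1n}(Q)-k_m T_{2n}(Q)$, and the population first order condition $h(\vec{\theta};Q)=k_m$ is equivalent to the vanishing of $\Psi(Q):=\theta_{1,m-1}(Q)-k_m\theta_{2,m-1}(Q)$, I would work with $\Psi_n$ and $\Psi$ rather than the ratios, thereby sidestepping any difficulty with the denominator $\theta_{2,m-1}(Q)$ approaching zero. Note that $\hat{Q}_n^*$ is a zero of $\Psi_n$ while $Q^*$ is the \emph{unique} zero of the continuous function $\Psi$ on the compact set $\mathcal{X}$. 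The strategy is then the classical two-ingredient recipe for strong consistency: (i) well-separation of the population root, and (ii) uniform almost-sure convergence of the sample criterion to the population criterion.

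First I would establish identifiability in the form of a well-separation property: for every $\varepsilon>0$ there is $\eta_\varepsilon>0$ with $\inf\{|\Psi(Q)|:Q\in\mathcal{X},\,|Q-Q^*|\ge\varepsilon\}\ge\eta_\varepsilon>0$. To see this, set $O=\{Q\in\mathcal{X}:|Q-Q^*|<\varepsilon\}$, a relatively open subset of $\mathcal{X}$, so that by Lemma~\ref{compactlem} the complement $\bar{O}=\mathcal{X}\setminus O$ is compact. The maps $Q\mapsto\theta_{1,m-1}(Q)$ and $Q\mapsto\theta_{2,m-1}(Q)$ are continuous (by dominated convergence, since $A2$ forces $\mathbb{P}(X=Q)=0$ and $A3$ supplies integrable dominating moments), hence $|\Psi|$ is continuous and, by the Extreme Value Theorem (Th.~\ref{extremevalue}), attains a minimum $\eta_\varepsilon$ on $\bar{O}$. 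Since $Q^*$ is the unique zero of $\Psi$ and $Q^*\notin\bar{O}$, this minimum is strictly positive.

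The substantive step is the uniform strong law $\sup_{Q\in\mathcal{X}}|\Psi_n(Q)-\Psi(Q)|\overset{a.s.}{\to}0$, for which the pointwise convergence of property P2 alone does not suffice. For $T_{2n}(Q)=\frac{1}{n}\sum_i(X_i-Q)^{m-1}$ this is routine: expanding in powers of $Q$ gives a finite linear combination of sample raw moments $m_j^\prime$, each converging almost surely to $\mu_j^\prime$ by the SLLN, and since $Q$ ranges over a bounded set the convergence is uniform. The delicate term is $T_{1n}(Q)=\sum_{j=0}^{m-1}\binom{m-1}{j}(-1)^jQ^{m-1-j}\,d_j(Q)$ with $d_j(Q)=\int_{[0,Q]}x^j\,dF_n(x)$, because the truncation point $Q$ varies. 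I would control $\sup_Q|d_j(Q)-\delta_j(Q)|$ via the Glivenko--Cantelli theorem: $\sup_x|F_n(x)-F(x)|\overset{a.s.}{\to}0$, and integration by parts transfers this uniform control of the empirical distribution function to the truncated moments $d_j(Q)$ (legitimate because $x^j$ has bounded variation on the compact support). Combining the two pieces with the boundedness of $Q^{m-1-j}$ on $\mathcal{X}$ yields the claimed uniform convergence. This Glivenko--Cantelli step for the $Q$-indexed truncated moments is the main obstacle, as it is precisely what upgrades the pointwise SLLN to the uniformity the argument demands.

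Finally I would combine the two ingredients. Fix $\varepsilon>0$ and work on the almost-sure event on which $\sup_Q|\Psi_n-\Psi|\to0$. For $n$ large enough this supremum falls below $\eta_\varepsilon/2$, so for every $Q\in\bar{O}$ we have $|\Psi_n(Q)|\ge|\Psi(Q)|-\eta_\varepsilon/2\ge\eta_\varepsilon-\eta_\varepsilon/2=\eta_\varepsilon/2>0$; hence $\Psi_n$ has no zero in $\bar{O}$. Since $\hat{Q}_n^*$ is a zero of $\Psi_n$, it must lie in $O$, i.e. $|\hat{Q}_n^*-Q^*|<\varepsilon$ for all large $n$. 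As $\varepsilon>0$ was arbitrary, $\hat{Q}_n^*\overset{a.s.}{\to}Q^*$. Existence of $\hat{Q}_n^*$ for large $n$ is already guaranteed with probability one by the preceding theorem, so no separate existence argument is needed.
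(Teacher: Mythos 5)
Your proposal is correct, and while it shares the paper's high-level architecture (excise an open neighbourhood $O$ of $Q^*$, invoke Lemma~\ref{compactlem} and the Extreme Value Theorem on the compact complement $\bar{O}$ to get a positive separation constant, then show the sample root cannot live in $\bar{O}$ for large $n$), it diverges in two substantive ways, both of which actually supply steps the paper's own proof leaves unjustified. First, the paper measures separation through the cost function, setting $r=\min_{Q\in\bar{O}}\varphi_m(Q)-\varphi_m(Q^*)$, but then applies the threshold $\epsilon<r/2$ to the estimating function $h(\vec{\theta};\cdot)$; the link between a gap in $\varphi_m$ and a gap in $h(\vec{\theta};\cdot)-k_m$ is never established. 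You instead define the separation constant $\eta_\varepsilon$ directly for the criterion $\Psi$ whose zero the estimator actually is, which makes the final exclusion step airtight, and clearing the denominator avoids any issue where $\theta_{2,m-1}(Q)$ is small or changes sign. Second, and more importantly, the paper deduces $|h(\vec{\theta},\hat{Q}_n^*)-k_m|<\epsilon$ from the pointwise almost-sure convergence in P2 evaluated at the random, $n$-dependent point $\hat{Q}_n^*$; pointwise convergence for each fixed $Q$ does not license this, and your identification of the uniform strong law $\sup_{Q\in\mathcal{X}}|\Psi_n(Q)-\Psi(Q)|\overset{a.s.}{\to}0$ (via polynomial expansion plus Glivenko--Cantelli control of the $Q$-truncated moments $d_j(Q)$) is precisely the missing ingredient. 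The only caveat, which your write-up inherits from the paper rather than introduces, is the identification of the unique minimiser $Q^*$ with the unique zero of the first-order condition $\Psi$; the paper makes this identification implicitly throughout, so your reading is consistent with it, but strictly speaking uniqueness of the argmin does not by itself rule out other critical points of $\varphi_m$ in $\mathcal{X}$.
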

\begin{proof}
Let $O \subseteq \mathcal{X}$ denote an arbitrary open neighbourhood of $Q^*$. From lemma~\ref{compactlem}, the complement of $O$, $\bar{O}=\mathcal{X}\setminus O$ is also a compact set. Notice that the expected cost $E_\mathbb{G} [C_m(X,Q)] ~(=\varphi_m (Q), \; say)$, is a continuous function of Q. Hence, from Theorem~\ref{extremevalue},  the stochastic optimisation problem $\underset{Q}{argmin} ~\varphi_m (Q)$ will have a solution in $\Bar{O}$ with unique minimum value of $\varphi_m (Q)$. 
Let us denote, $ \displaystyle r = \min_{Q \in \Bar{O}} \varphi_m (Q) - \varphi_m (Q^*) > 0$. 

Also, from property P2 of $T_{in},\; (i=1,2)$ and the continuous mapping theorem, it can be easily seen that $h(\vec{T}_n,Q) \overset{a.s.}{\to}h(\vec{\theta},Q), ~\forall ~Q\in \mathcal{X}$. Since $\hat{Q}_n^*\in \mathcal{X}$, there would exist $n_0(\epsilon)$ for every $\epsilon>0$, such that $\mid h(\vec{\theta},\hat{Q}_n^*)-k_m \mid<\epsilon$,  $\forall ~ n\geq n_0(\epsilon)$, $wp$ 1. Therefore $\exists~n>n_0(\epsilon)$ for every $0<\epsilon<\frac{r}{2}$, so that
\begin{equation}
|h(\theta,\hat{Q}_n^*) - h(\theta,Q^*)|<\epsilon,  ~\forall ~n>n_0(\epsilon),\; wp\;1
\label{inqmin}
\end{equation}
This implies $\hat{Q}_n^* \notin \bar{O}$. $O$ being arbitrary, $\hat{Q}_n^* \overset{a.s.}{\to} Q^*$. 
\end{proof}
The roots of the FOC (Eq.~\ref{altfinalFOC}) may not be unique. Let the set of corresponding distinct roots be denoted by ${\mathbf{Q}}^*=\{Q_1^*,Q_2^* \ldots Q_k^*\},\; k=1,2\ldots m-1 $. Similarly, there could be $p~(\geq 1)$ roots of the random polynomial (Eq.~\ref{altesteqn}), say $\hat{\mathbf{Q}}^*=\{\hat{Q}_{1}^*,\hat{Q}_{2}^* \ldots \hat{Q}_{p}^*\}$. In the next two corollaries, we extend Theorem~\ref{extvth} for multiple roots. 
\textcolor{black}{
\begin{corollary}
Let $\hat{\mathbf{Q}}^*$ be the set of distinct roots of the random polynomial (Eq.~\ref{altesteqn}) and $Q^*$ be unique solution to the stochastic minimisation problem (\ref{sygennv}). Then $\hat{Q}_{max}^* \overset{a.s}{\to} Q^*$, where $\hat{Q}_{max}^*=\max\{
\hat{\mathbf{Q}}^*\}$.
\end{corollary}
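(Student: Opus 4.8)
The plan is to leverage the almost-sure convergence of the estimated polynomial coefficients together with the continuity of polynomial roots in their coefficients, thereby reducing the multiple-root situation to the single-root consistency already established in Theorem~\ref{extvth}. As noted before, $d_j \overset{a.s.}{\to} \delta_j$ and $m_j^\prime \overset{a.s.}{\to} \mu_j^\prime$ by Kolmogorov's strong law, so each $\hat{\beta}_j \overset{a.s.}{\to} \beta_j$. Hence, on a single probability-one event, the random polynomial $\hat{p}_n(Q)=\sum_{j=0}^{m-1}(-1)^j \hat{\beta}_j Q^{m-1-j}$ converges coefficientwise to the limiting polynomial $p(Q)=\sum_{j=0}^{m-1}(-1)^j \beta_j Q^{m-1-j}$ of the population first-order condition in Eq.~\ref{altfinalFOC}, whose set of roots is $\mathbf{Q}^*$. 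Moreover, by the preceding theorem on existence of positive roots with probability one, $\hat{\mathbf{Q}}^*$ is non-empty almost surely, so $\hat{Q}_{max}^*$ is well-defined.

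First I would invoke the classical continuity of the zeros of a polynomial as functions of its coefficients (through Hurwitz's or Rouch\'e's theorem): provided the leading coefficient does not vanish in the limit, so that the degree is stable, the multiset of zeros of $\hat{p}_n$ converges to that of $p$, and conversely each zero of $p$ is a limit of zeros of $\hat{p}_n$. Restricting to the feasible (positive, real) roots that constitute $\hat{\mathbf{Q}}^*$ and $\mathbf{Q}^*$, this gives in particular that the largest feasible root varies continuously along the sequence, i.e.\ $\hat{Q}_{max}^* \overset{a.s.}{\to} \max\{\mathbf{Q}^*\}$.

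It then remains to identify $\max\{\mathbf{Q}^*\}$ with $Q^*$. Here I would use the maximum-magnitude selection convention adopted in Theorem~\ref{rootexist} together with the hypothesis that the minimisation problem in Eq.~\ref{sygennv} has a unique solution: since the optimal order quantity is, by convention, the largest feasible root of the population first-order condition, uniqueness of the minimiser forces $\max\{\mathbf{Q}^*\}=Q^*$. Combining this with the previous display yields $\hat{Q}_{max}^* \overset{a.s.}{\to} Q^*$. Alternatively, one may reach the same conclusion by upgrading the pointwise convergence $h(\vec{T}_n;Q)\to h(\vec{\theta};Q)$ of Theorem~\ref{extvth} to uniform convergence over the compact set $\mathcal{X}$ and then repeating the neighbourhood argument there for the maximal root.

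The hard part will be making the root-continuity step rigorous in this non-standard setting. Two issues demand care. First, the coefficients $\hat{\beta}_j$ and $\beta_j$ themselves depend on $Q$ through the truncation set $S_Q$, so $\hat{p}_n$ is really piecewise polynomial; the argument must be localised to a neighbourhood of $Q^*$ on which the truncation pattern stabilises almost surely, where it reduces to a genuine polynomial with the stated coefficients. Second, I must preclude a spurious root of $\hat{p}_n$ drifting above $Q^*$ (or escaping to $+\infty$) and being wrongly selected as $\hat{Q}_{max}^*$; ruling this out relies on the non-degeneracy of the limiting leading coefficient and on the compactness of $\mathcal{X}$ (assumption A1 and Lemma~\ref{compactlem}), which confine all feasible roots to a bounded region and force every convergent subsequence of $\hat{Q}_{max}^*$ to land in $\mathbf{Q}^*$.
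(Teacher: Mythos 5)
Your proposal is correct in substance but takes a genuinely different, and considerably heavier, route than the paper. The paper's own proof is a one-liner: it observes that $\hat{Q}_{max}^*=\max\{\hat{\mathbf{Q}}^*\}$ is a uniquely determined (measurable) selection from the root set, and then re-runs the neighbourhood argument of Theorem~\ref{extvth} for that selection --- the key point being that the inequality $|h(\vec{\theta},\hat{Q}_n^*)-k_m|<\epsilon$ used there holds for \emph{any} root of the estimating equation, so it holds in particular for the maximal one, forcing it out of $\bar{O}$ for every open neighbourhood $O$ of $Q^*$. You instead argue through coefficientwise almost-sure convergence $\hat{\beta}_j\to\beta_j$ plus continuity of polynomial zeros in the coefficients (Hurwitz/Rouch\'e), and then identify $\max\{\mathbf{Q}^*\}$ with $Q^*$. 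What your approach buys is a more honest treatment of two issues the paper suppresses entirely: the dependence of the coefficients on $Q$ through the truncation set $S_Q$ (so the ``polynomial'' is really piecewise polynomial), and the possibility of spurious maximal roots escaping; flagging these is genuinely valuable. Two cautions, though. First, for the real, positive roots you should rely only on the upper-semicontinuity direction (every limit point of roots of $\hat{p}_n$ is a root of $p$); the converse direction can fail for the \emph{real} root set when a real root of $p$ has even multiplicity and splits into complex conjugates, so ``the largest feasible root varies continuously'' is not automatic. Second, your identification step $\max\{\mathbf{Q}^*\}=Q^*$ rests on the paper's max-magnitude selection convention rather than on the stated uniqueness of the minimiser; under the intended reading (the population first-order condition in Eq.~\ref{altfinalFOC} has the single feasible root $Q^*$), that step is vacuous and your subsequential-limit argument already closes the proof. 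Neither caution is fatal, but the paper's route, for all its terseness, avoids both.
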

\begin{proof}
Notice, the maximum of $\hat{\mathbf{Q}^*}$  is unique. Hence, from Th.~\ref{extvth}, the proof is immediate. 
\end{proof}
\begin{corollary}
Let $\hat{Q}_n^*$ be the unique solution to the random polynomial equation Eq.~\ref{altesteqn} and $\hat{\mathbf{Q}}^*$ be the set of distinct solutions to the stochastic minimisation problem (\ref{sygennv}). Then $\hat{Q}^* \overset{a.s}{\to} Q_i^*$; for exactly one $i$; $i=i=1,2,\ldots,k$. 
\end{corollary}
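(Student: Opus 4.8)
The plan is to exploit the almost-sure convergence of the random polynomial to its deterministic counterpart that is already available, and then to appeal to the continuity of the zeros of a polynomial as functions of its coefficients. Recall from the preceding theorem that $\hat{\beta}_j \overset{a.s.}{\to} \beta_j$ for every $j=0,1,\ldots,m-1$. Consequently the random polynomial $\sum_{j=0}^{m-1}(-1)^j\hat{\beta}_j Q^{m-1-j}$ converges, coefficientwise and almost surely, to the deterministic FOC polynomial $\sum_{j=0}^{m-1}(-1)^j\beta_j Q^{m-1-j}$ of Eq.~\ref{altfinalFOC}, whose distinct feasible (positive, lying in $\mathcal{X}$) zeros are exactly the elements of $\mathbf{Q}^*=\{Q_1^*,\ldots,Q_k^*\}$. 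The goal is to show that the unique feasible root $\hat{Q}_n^*$ of the random polynomial is trapped, in the limit, at precisely one of these $Q_i^*$.

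First I would fix a sample path in the probability-one event on which all the convergences $\hat{\beta}_j\to\beta_j$ hold simultaneously; everything below is then deterministic on this path. Since $\hat{Q}_n^*\in\mathcal{X}$ and $\mathcal{X}$ is compact, the sequence $\{\hat{Q}_n^*\}$ is bounded and admits convergent subsequences. For any subsequence $\hat{Q}_{n_l}^*\to q$, I would substitute into the defining relation $\sum_{j=0}^{m-1}(-1)^j\hat{\beta}_{j}(\hat{Q}_{n_l}^*)^{m-1-j}=0$ and pass to the limit. As the left-hand side is jointly continuous in the coefficients and in $Q$, and the coefficients converge to $\beta_j$ while $\hat{Q}_{n_l}^*\to q$, I obtain $\sum_{j=0}^{m-1}(-1)^j\beta_j q^{m-1-j}=0$, so $q$ is a zero of the true polynomial. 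Because $\mathcal{X}$ is closed, $q\in\mathcal{X}$; being a zero of the FOC polynomial lying in $\mathcal{X}$, $q$ is one of its distinct feasible roots, i.e.\ $q=Q_i^*$ for some $i\in\{1,\ldots,k\}$. Thus every subsequential limit of $\{\hat{Q}_n^*\}$ lies in the finite set $\mathbf{Q}^*$.

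To upgrade subsequential convergence into convergence of the whole sequence to a single index, I would use the separation of the roots: since $Q_1^*,\ldots,Q_k^*$ are distinct, one may fix pairwise-disjoint open neighbourhoods $O_1,\ldots,O_k$. The cleanest route is to make the selection rule for $\hat{Q}_n^*$ explicit, following the convention adopted in the proof of Theorem~\ref{rootexist}, namely to take $\hat{Q}_n^*$ to be the feasible estimated root of \emph{maximum magnitude}. Continuity of the zeros then pins the maximum-magnitude feasible estimated root to the maximum-magnitude feasible true root, $Q_i^*=\max\{Q_1^*,\ldots,Q_k^*\}$, so that $\hat{Q}_n^*\overset{a.s.}{\to}Q_i^*$ for that one specific $i$; since a convergent sequence has a unique limit, the convergence is to exactly one index.

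The main obstacle is precisely this last step: ruling out oscillation of the estimated root among the distinct true roots and reconciling the hypothesis that $\hat{Q}_n^*$ is the \emph{unique} feasible root with the existence of $k\ge 2$ distinct true feasible roots. By the Hurwitz/Rouch\'e form of the continuity of zeros, near every simple true root $Q_i^*$ the random polynomial eventually possesses a feasible root, which for large $n$ would produce $k$ distinct feasible estimated roots and contradict a literal reading of uniqueness. I would therefore resolve the corollary by adopting the maximum-magnitude selection throughout, under which the continuity argument delivers convergence to a single $Q_i^*$ without ambiguity; establishing this continuity-of-zeros statement rigorously, as opposed to the coefficientwise convergence already in hand, is the one genuinely non-routine ingredient, while the rest reduces, via compactness of $\mathcal{X}$, to the machinery of Theorem~\ref{extvth}.
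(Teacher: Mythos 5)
Your proposal is correct in substance but follows a genuinely different route from the paper. The paper's own proof is two lines: it places disjoint open neighbourhoods $O_i$ around each $Q_i^*$, observes that $O=\cup_{i=1}^k O_i$ is open, reruns the value-separation argument of Theorem~\ref{extvth} (via $r=\min_{Q\in\bar O}\varphi_m(Q)-\varphi_m(Q^*)>0$) to force $\hat{Q}_n^*\in O$ eventually, and concludes from disjointness that $\hat{Q}_n^*$ lies in exactly one $O_i$. You instead work directly with the random polynomial: compactness of $\mathcal{X}$ gives convergent subsequences, coefficientwise a.s.\ convergence $\hat{\beta}_j\to\beta_j$ plus joint continuity shows every subsequential limit is a feasible zero of the FOC polynomial, and the maximum-magnitude selection rule pins the whole sequence to a single $Q_i^*$. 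What your approach buys is twofold: it makes explicit the step the paper's proof silently skips --- namely that ``$\hat{Q}_n^*$ is in exactly one $O_i$ for each $n$'' does not by itself preclude oscillation among different $O_i$'s as $n$ varies, so some tie-breaking or monotonicity of the limit set is genuinely needed --- and it surfaces the latent tension between the hypothesis that $\hat{Q}_n^*$ is the \emph{unique} estimated root and the existence of $k\ge 2$ simple true roots, which by a Hurwitz-type argument would eventually generate $k$ estimated feasible roots. The paper's proof is shorter and consistent with its earlier machinery; yours is more self-contained and more honest about where the argument is delicate. One caveat applying to both treatments: the coefficients $\hat{\beta}_j$ and $\beta_j$ depend on $Q$ through the indicator $\mathbb{I}(X_i\le Q)$ in $d_j$ and through $S_Q$ in $\delta_j$, so passing to the limit along $\hat{Q}_{n_l}^*\to q$ requires convergence of $\hat{\beta}_j(\cdot)$ that is uniform in $Q$ (a Glivenko--Cantelli-type strengthening of P2), not merely pointwise; neither you nor the paper addresses this, so it is not a defect relative to the paper's own standard, but it is where a fully rigorous version of either argument would need to do extra work.
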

\begin{proof}
Let $O_i$ denote an arbitrary open neighbourhood around $Q_i^*$ selected in such a way that $O_i$'s are disjoint. Then, $O = \cup_{i=1}^{k} O_i$ is also an open set. Implementing the same argument as Theorem~\ref{extvth} we ensure that $\hat{Q}_n^* \in O$. Disjoint property of $O_i$ indicates $\hat{Q}_n^* \in O_i$ for exactly one $i$.
\end{proof}
\begin{corollary}
Let $\hat{\mathbf{Q}}^*$ be the set of distinct solutions to the random polynomial equation Eq.~\ref{altesteqn} and $\mathbf{Q}^*$ is the set of distinct solutions of the FOC Eq.~\ref{altfinalFOC}, then $\hat{{Q}}_{max}^* \overset{a.s}{\to} Q_i^*$; for exactly one $i$; $i=i=1,2,\ldots,k$.
\end{corollary}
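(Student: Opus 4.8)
The plan is to reduce this statement to the two preceding corollaries. The key observation is that, although the random polynomial in Eq.~\ref{altesteqn} may admit several positive roots, its maximum $\hat{Q}_{max}^*=\max\{\hat{\mathbf{Q}}^*\}$ is a single, unambiguously defined statistic. The set $\hat{\mathbf{Q}}^*$ is non-empty $wp$ $1$ by the sample analogue of Theorem~\ref{rootexist} (the preceding root-existence theorem for the random polynomial), and it is finite since Eq.~\ref{altesteqn} has at most $m-1$ roots; hence $\hat{Q}_{max}^*$ exists and is unique $wp$ $1$. Being an element of $\hat{\mathbf{Q}}^*$, it satisfies the estimating equation, $h(\vec{T}_n;\hat{Q}_{max}^*)=k_m$. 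Thus $\hat{Q}_{max}^*$ can be made to play exactly the role that the unique estimated root played in the second corollary, and the present statement should follow by repeating that localisation argument with $\hat{Q}_{max}^*$ substituted for $\hat{Q}_n^*$.

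Carrying this out, first I would fix pairwise disjoint open neighbourhoods $O_1,\ldots,O_k\subseteq\mathcal{X}$ of the distinct population roots $Q_1^*,\ldots,Q_k^*$ and set $O=\bigcup_{i=1}^{k}O_i$, which is open. As in Theorem~\ref{extvth}, property P2 of $\vec{T}_n$ together with the continuous mapping theorem gives $h(\vec{T}_n;Q)\overset{a.s.}{\to}h(\vec{\theta};Q)$ for every $Q\in\mathcal{X}$, while $\{Q\in\mathcal{X}:h(\vec{\theta};Q)=k_m\}=\mathbf{Q}^*$ is a finite set of isolated points. Replaying the localisation step of Theorem~\ref{extvth} with $\hat{Q}_{max}^*$ in place of $\hat{Q}_n^*$, using $h(\vec{T}_n;\hat{Q}_{max}^*)=k_m$ and the fact that $h(\vec{\theta};\cdot)$ stays bounded away from $k_m$ on the compact set $\mathcal{X}\setminus O$ (compactness from Lemma~\ref{compactlem}), forces $\hat{Q}_{max}^*\in O$ for all large $n$, $wp$ $1$. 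The disjointness of the $O_i$ then confines $\hat{Q}_{max}^*$ to at most one $O_i$, and letting the radii of the neighbourhoods shrink to zero would deliver $\hat{Q}_{max}^*\overset{a.s.}{\to}Q_i^*$ for exactly one $i\in\{1,\ldots,k\}$.

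The step I expect to require the most care is verifying that the index $i$ actually stabilises, \emph{i.e.} that $\hat{Q}_{max}^*$ converges to a single $Q_i^*$ rather than oscillating among the disjoint neighbourhoods $O_i$ as $n$ grows; this is the gap left implicit in the proof of the second corollary. The neighbourhood argument above only shows that every limit point of $\hat{Q}_{max}^*$ lies in $\mathbf{Q}^*$, so one must rule out multiple limit points. I would close this by strengthening the pointwise statement in P2 to \emph{uniform} $a.s.$ convergence of $h(\vec{T}_n;\cdot)$ on the compact set $\mathcal{X}$ (obtainable by a Glivenko--Cantelli type argument, since $T_{1n}(Q)$ and $T_{2n}(Q)$ are sample means of $\mathbb{G}$-integrable functions with monotone structure in $Q$) together with the isolation of the finitely many roots in $\mathbf{Q}^*$: uniform convergence of the estimating function transfers to convergence of the largest solution of $h(\vec{T}_n;\cdot)=k_m$ towards the largest solution of $h(\vec{\theta};\cdot)=k_m$, which is one specific $Q_i^*$. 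Everything else is a verbatim repetition of the arguments in Theorem~\ref{extvth} and the two preceding corollaries, so no genuinely new estimates are needed.
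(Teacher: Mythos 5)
Your reduction to the two preceding corollaries is exactly the paper's proof, which consists of the single line that the result ``immediately follows from previous two corollaries'': treating $\hat{Q}_{max}^*$ as the unique, well-defined statistic from the first corollary and replaying the disjoint-neighbourhood localisation of the second corollary (itself a rerun of Theorem~\ref{extvth}) is precisely what is intended. The additional worry you raise --- that the index $i$ might fail to stabilise, so that $\hat{Q}_{max}^*$ oscillates among the disjoint $O_i$ rather than converging to a single $Q_i^*$ --- is a genuine subtlety the paper leaves unaddressed (already in the second corollary), and your uniform-convergence patch is a sensible repair, though its last step (largest sample root $\to$ largest population root) additionally requires a sign-change or transversality condition at the largest population root to guarantee that a sample root persists in its neighbourhood for all large $n$.
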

\begin{proof}
Proof immediately follows from previous two corollaries.
\end{proof}
}

From the above theorem, it can be easily seen that the estimated optimal cost $\hat{\varphi}_n^*=\varphi_m(\hat{Q}^*)$ almost surely converges to the true optimal cost $\varphi^*_m$, using the continuity of the cost function $\varphi_m(Q)$.

\section{Monte-Carlo Simulation experiments}
In this section we present the results of Monte-Carlo simulation experiments on the non-parametric estimator of the optimal order quantity in SyGen-NV set-up. We consider here two known probability distributions for the demand, \emph{viz.} $Uniform(0,1)$ and $Exp(1)$. The severity index $m$ is assumed to be known ($\in \{2,3,4,5,10\}$). Further, we take the excess-to-shortage cost ratio, $\lambda~(=\frac{C_e}{C_s}) \in \{0.25, 0.45, 0.65, 0.85, 1.05, 1.25, 1.45, 1.65, 1.85 \}$. For each of the $(m, \lambda)$ pairs, we compute numerically the optimal order quantities for both $Uniform$ and $Exponential$ true demands. Further, we conduct  3.15 million Monte-Carlo simulation experiments for each of the demand distributions to understand the small and large sample properties of the non-parametric estimator. In particular, we draw random samples of size $n$ $(=20,50,100,500,1000,5000 $, $10000)$ for each combination of $(\lambda, ~m)$ and estimate the optimal order quantities $\hat{Q}_n^*$ therefrom. We repeat this process for $M$ times ($M=5000$). We study the sampling properties of $\hat{Q}_n^*$ from these $M$ estimates.
\subsection{$Unif(0,1)$ Demand distribution }
The optimal order quantity in the SyGen-NV problem with $Unif(0,1)$ demand is given by \citep{mukhoti2021}
\[Q_n^*=\frac{1}{1+\lambda^{\frac{1}{m}}} \]
$\hat{Q}_n^*$ can be obtained, on the other hand, from the estimating equation~(Eq.~\ref{altesteqn}). However, existence of $\hat{Q}_n^*$ is conditional for even $m$. Let $p_{\lambda,2k}^*$ denote the probability of existence of an estimate of the optimal order quantity when $m=2k$ and $\hat{p}^*_{\lambda,2k}$ be its estimate obtained from the $M$ iterations. The following table reports large sample (n=10000) estimates,  $\hat{p}_{\lambda,2k}^*,~k=1,2,5$ and $\forall~\lambda$ considered in the simulation experiments. 

\begin{table}[ht]
\centering
\begin{tabular}{rrrrrrrrrr}
  \hline
 $\lambda$ & 0.25 & 0.45 & 0.65 & 0.85 & 1.05 & 1.25 & 1.45 & 1.65 & 1.85 \\ 
  \hline
m=2 & 1.00 & 1.00 & 1.00 & 0.50 & 0.49 & 0.73 & 1.00 & 1.00 & 1.00 \\ 
  m=4 & 1.00 & 1.00 & 0.55 & 0.51 & 0.45 & 0.49 & 0.50 & 1.00 & 1.00 \\ 
  m=10 & 1.00 & 0.50 & 0.50 & 0.51 & 0.36 & 0.51 & 0.49 & 0.49 & 0.49 \\ 
   \hline
\end{tabular}
\caption{$\hat{p}_{\lambda,2k}^*$ for Uniform demand} 
\end{table}
It may be observed here that it is least probable to obtain $\hat{Q}_n^*$ when $\lambda$ is close to unity. Also, with increasing severity $(m)$, it becomes more difficult to obtain $\hat{Q}_n^*$ as the probability decreases for a given $\lambda$. The probability distribution of the estimated order quantity is presented in the form of box-plots in Fig.~\ref{UnifDenPlot}. For $\lambda<1$, the probability distributions of $\hat{Q}_n^*$ are stochastically larger with increasing severity levels, the distribution for $m=2$ being centred at the highest value among all others. For $\lambda>1$, the distributions of estimated order quantity for even $m$ are different than those of the odd $m$. Odd severity seems to result in stochastically smaller distribution of $\hat{Q}_n^*$. The variation, on the other hand, seems to decrease with severity for all $\lambda$.

Next we present the performance study of $\hat{Q}_n^*$ using the mean square error (MSE) computed from the $M$ estimates as $\displaystyle MSE=\frac{1}{M}\sum_{i=1}^M (\hat{Q}_{in}^*-Q_n^*)^2$. Figures~\ref{MSE1unif}-\ref{MSE9unif} in the appendix presents the MSE's plotted against sample sizes. It could be seen that for $\lambda<1$, the MSEs converge to $0$ with increasing $n$ for all $m$, with worst performance of $\hat{Q}_n^*$ observed at $m=2$. For $\lambda>1$, however, the convergence is slow in case of even $m$.

\subsection{$Exp(1)$ Demand distribution }

The optimal order quantity in the SyGen-NV problem with $Exp(1)$ demand can be obtained from the random polynomial (Eq.~\ref{altfinalFOC}) by replacing the partial and full raw moments by those for the $Exp(1)$ distribution. The modified equation is given as \citep{mukhoti2021}
\[\sum_{j=0}^{m-1}(-1)^j \left(Q\right)^{m-j-1}\frac{1}{(m-j-1)!} =  e^{-Q} \left[{\frac{C_s}{C_e}}-(-1)^m\right] \]
As described in the uniform case, $\hat{Q}_n^*$ can be obtained from the estimating equation~(Eq.~\ref{altesteqn}). Also, $\hat{p}^*_{\lambda,2k}$ can be obtained from the $M$ iterations in a similar manner as in the $Uniform$ demand case. The following table reports large sample (n=10000) estimates,  $\hat{p}_{\lambda,2k}^*,~k=1,2,5$ and $\forall~\lambda$ considered in the simulation experiments. 

\begin{table}[ht]
\centering
\begin{tabular}{rrrrrrrrrr}
  \hline
 & 0.25 & 0.45 & 0.65 & 0.85 & 1.05 & 1.25 & 1.45 & 1.65 & 1.85 \\ 
  \hline
2 & 1.00 & 1.00 & 1.00 & 0.49 & 0.47 & 0.51 & 1.00 & 1.00 & 1.00 \\ 
  4 & 1.00 & 1.00 & 1.00 & 0.55 & 0.18 & 0.55 & 1.00 & 1.00 & 1.00 \\ 
  10 & 1.00 & 1.00 & 0.90 & 0.35 & 0.10 & 0.48 & 0.78 & 0.98 & 1.00 \\ 
   \hline
\end{tabular}
\caption{Probability of Existence of the Optimal order quantity for Exponential demand)} 
\end{table}

The observations are similar to the $uniform$ demand case. The lowest probability of existence of a solution to the estimating equations occur when $\lambda$ is close to unity. It could be observed that obtaining a feasible $\hat{Q}_n^*$ is more difficult for increasing severity $(m)$, specifically near $\lambda=1$. 

Unlike the uniform demand case,  probability distribution of the estimated optimal order quantity increases stochastically with severity for all $\lambda$ (see Fig.~\ref{ExpDenPlot}). Not only the location, the scale (or variance) of the distribution also increases with $m$. 

In terms of MSE, $\hat{Q}_n^*$ performs well asymptotically as the MSE (vs. $n$) curve (see Fig.~\ref{MSE1Exp}-\ref{MSE9Exp}) decreases to zero with increasing sample size (for all $m$ and $\lambda$), the worst performance being observed for $m=10$. The best estimator, in the MSE sense, is obtained for $m=2$ when $\lambda<1$. However, for $\lambda>1$ performance of $\hat{Q}_n^*$ for $m=2$ worsens in small samples. 


\section{Discussion}
In this paper we have discussed non-parametric estimation of the optimal order quantity in case of a general newsvendor problem, where the severity of the losses are much more than merely the quantity lost. Major contributions of this paper are two-fold. First we have constructed a non-parametric estimation method for the optimal order quantity in the SyGen-NV problem with power type shortage and excess. Secondly, we have studied the properties and performances of the estimators of the optimal order quantities. 

Our contribution in the non-parametric estimation of the optimal order quantity starts with formulation of an estimating equation from the first order condition using uncensored demand data. We have presented strong consistency of the estimating function and its asymptotic distribution has been derived. Further, we have presented a random polynomial representation of the estimating equation and established feasibility of the solution by deriving conditions for existence of the zeroes of the random polynomial in almost sure sense. We have also proven the strong consistency of the estimated optimal order quantity.

The theoretical results in this paper has been supported by an exhaustive set of simulation experiments. In particular, we have considered known uniform and exponential as true demand distributions. For each of the demand distributions, we have estimated the probabilities of existence of positive zeroes of the estimating random polynomial. The results show that it is least likely to get an estimate of the optimal order quantity if the cost ratio is close to one. The distribution of the estimated optimal order quantities suggests that odd and even order of severity influences the estimates differently for uniform demand, whereas for exponential demand, the estimate increases uniformly with severity. Comparing the mean square errors for different sample sizes, severity and cost-ratio, it has been found that the estimators perform well in the MSE sense when severity is high in case of uniform demand and the opposite for exponential distribution.

 We conclude the paper with comments on future scope of research. A natural extension of the SyGen-NV problem would be to consider asymmetric weight functions for shortage and excess. Complexity arises due to different dimensions of the two costs as a result of asymmetric weighing. \cite{baraiya2019} discussed, in an unpublished manuscript, selection of weights so that the shortage and excess costs remain comparable. However, estimation of optimal order quantity in such asymmetric generalised newsvendor problem remains open.
 
 \section*{Acknowledgement}
 {This work was supported by the Indian Institute of Management Indore [SEED grant number no. RS/09/2019-20]. The authors would like to thank Dr. Abhirup Banerjee, Institute of Biomedical Engineering, University of Oxford for helpful suggestions on the simulation experiments. 

\bibliography{Reference}

 \newpage
\appendix
\section{} 
\subsection{Figures}\label{Figures}
\begin{figure}[h]
     \centering
     \begin{subfigure}[h]{0.3\textwidth}
         \centering
         \includegraphics[width=\textwidth]{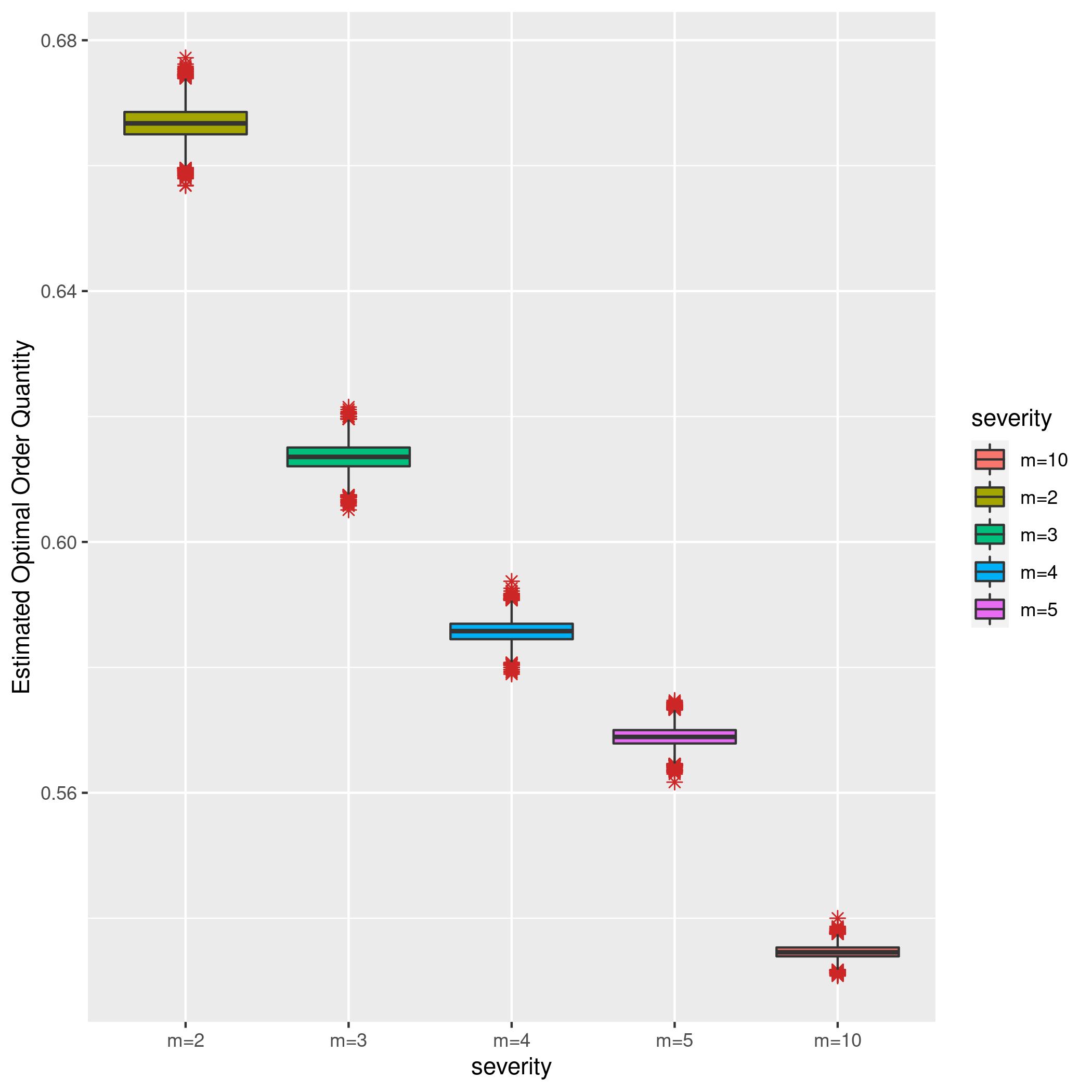}
         \caption{$\lambda=0.25$}
         \label{Qhat1unif}
     \end{subfigure}
     \hfill
     \begin{subfigure}[h]{0.3\textwidth}
         \centering
     \includegraphics[width=\textwidth]{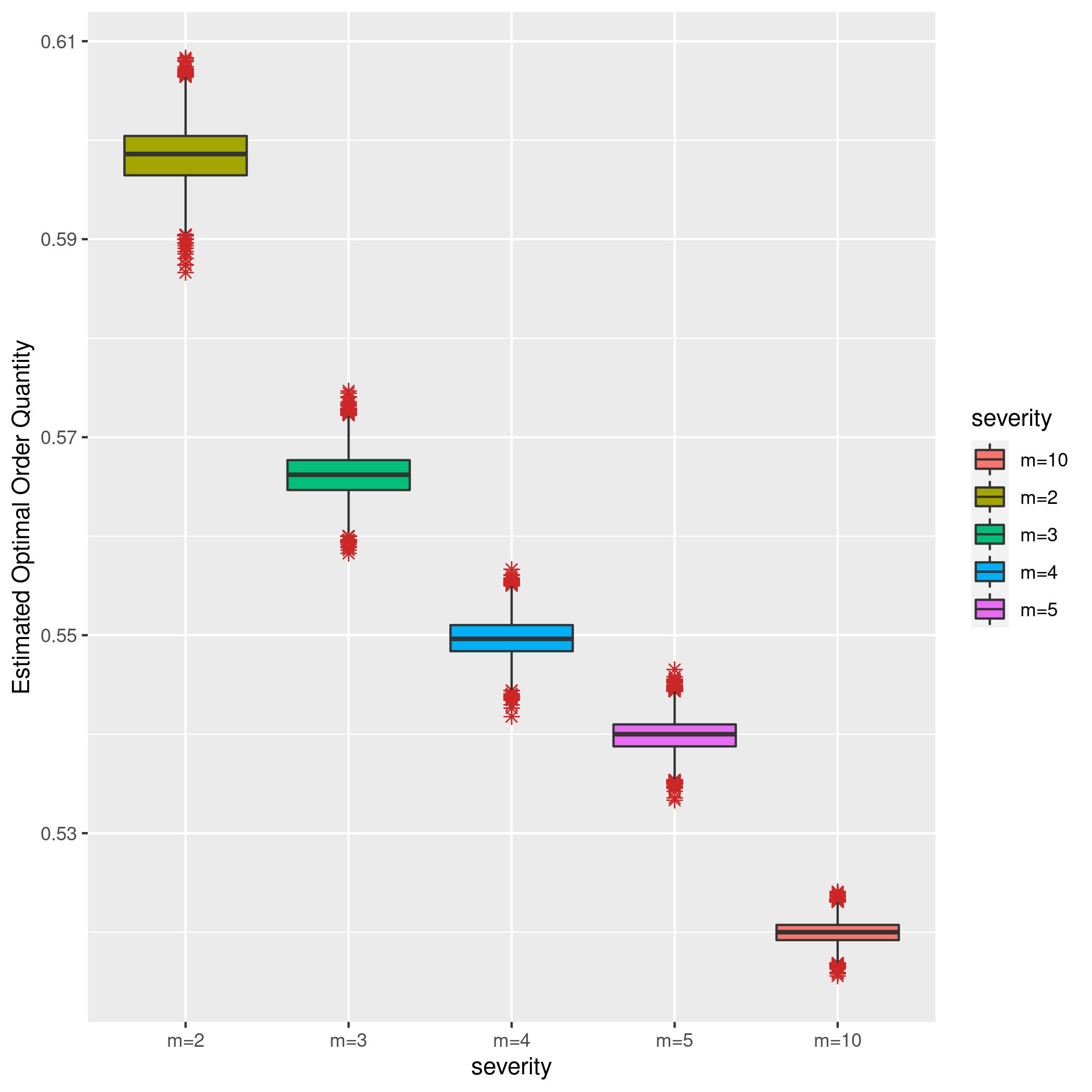}
     \caption{$\lambda=0.45$}
         \label{Qhat2unif}
     \end{subfigure}
     \hfill
     \begin{subfigure}[h]{0.3\textwidth}
         \centering \includegraphics[width=\textwidth]{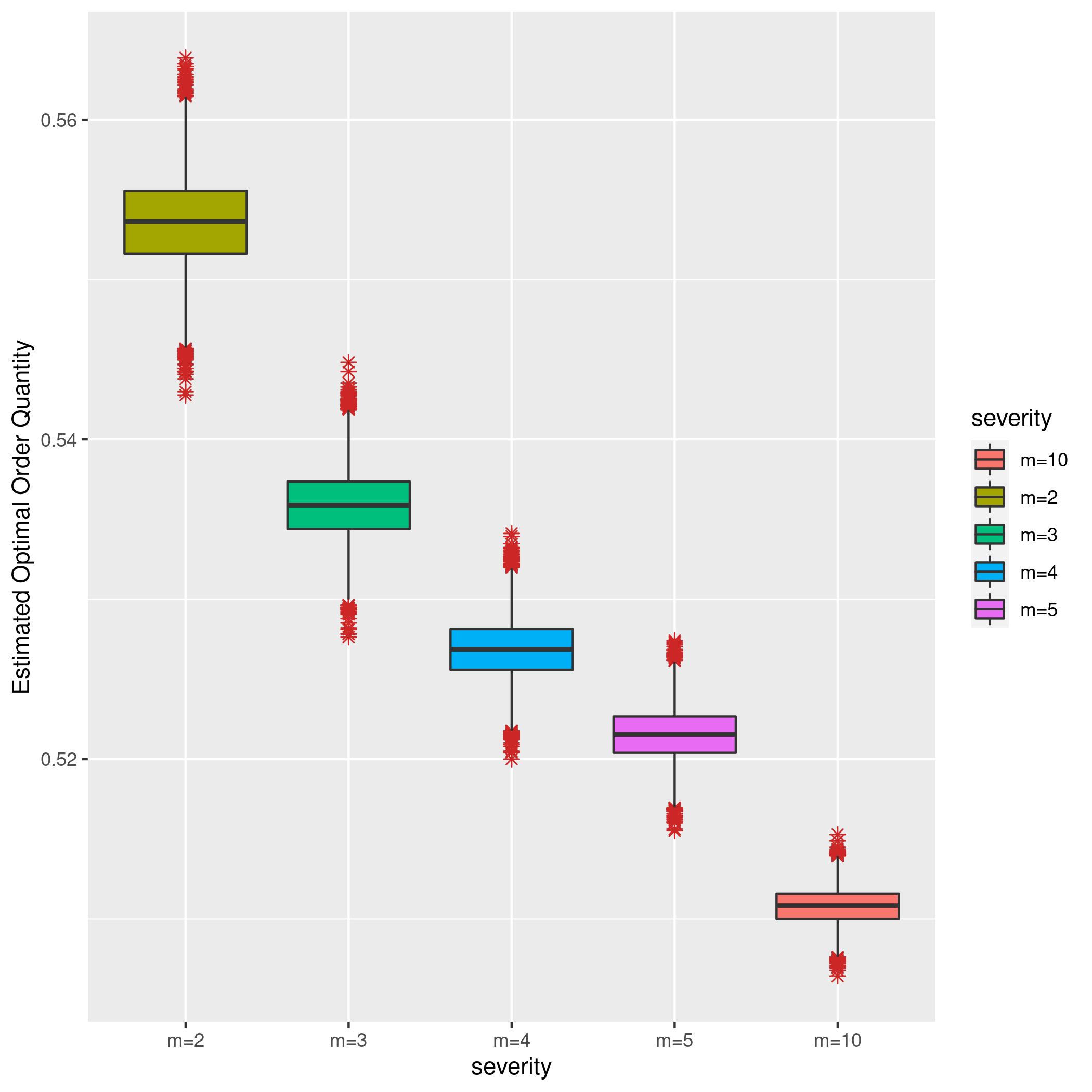}
         \caption{$\lambda=0.65$}
         \label{Qhat3unif}
     \end{subfigure}
     \\
     \begin{subfigure}[h]{0.3\textwidth}
         \centering
         \includegraphics[width=\textwidth]{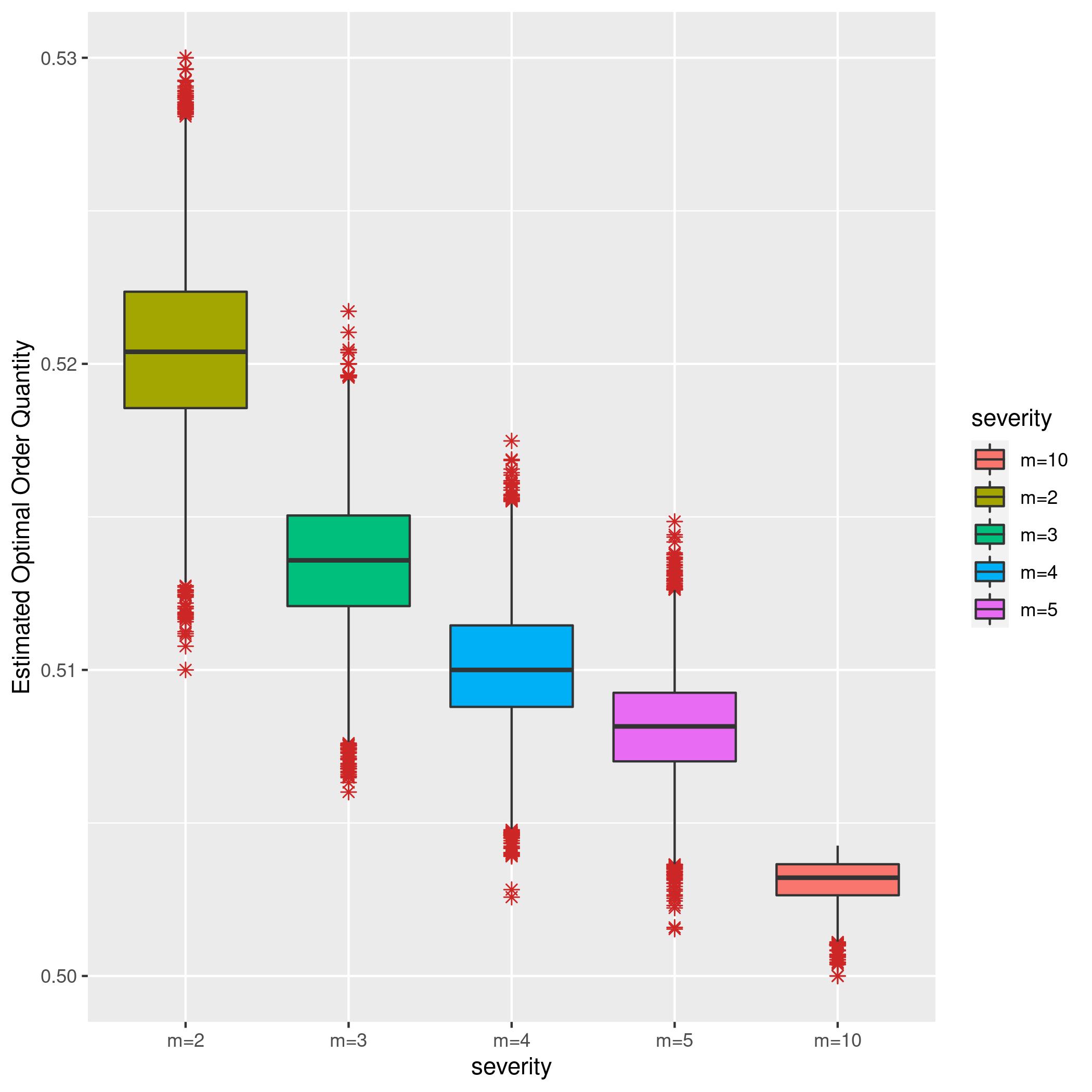}
         \caption{$\lambda=0.85$}
         \label{Qhat4unif}
     \end{subfigure}
     \hfill
     \begin{subfigure}[h]{0.3\textwidth}
         \centering
     \includegraphics[width=\textwidth]{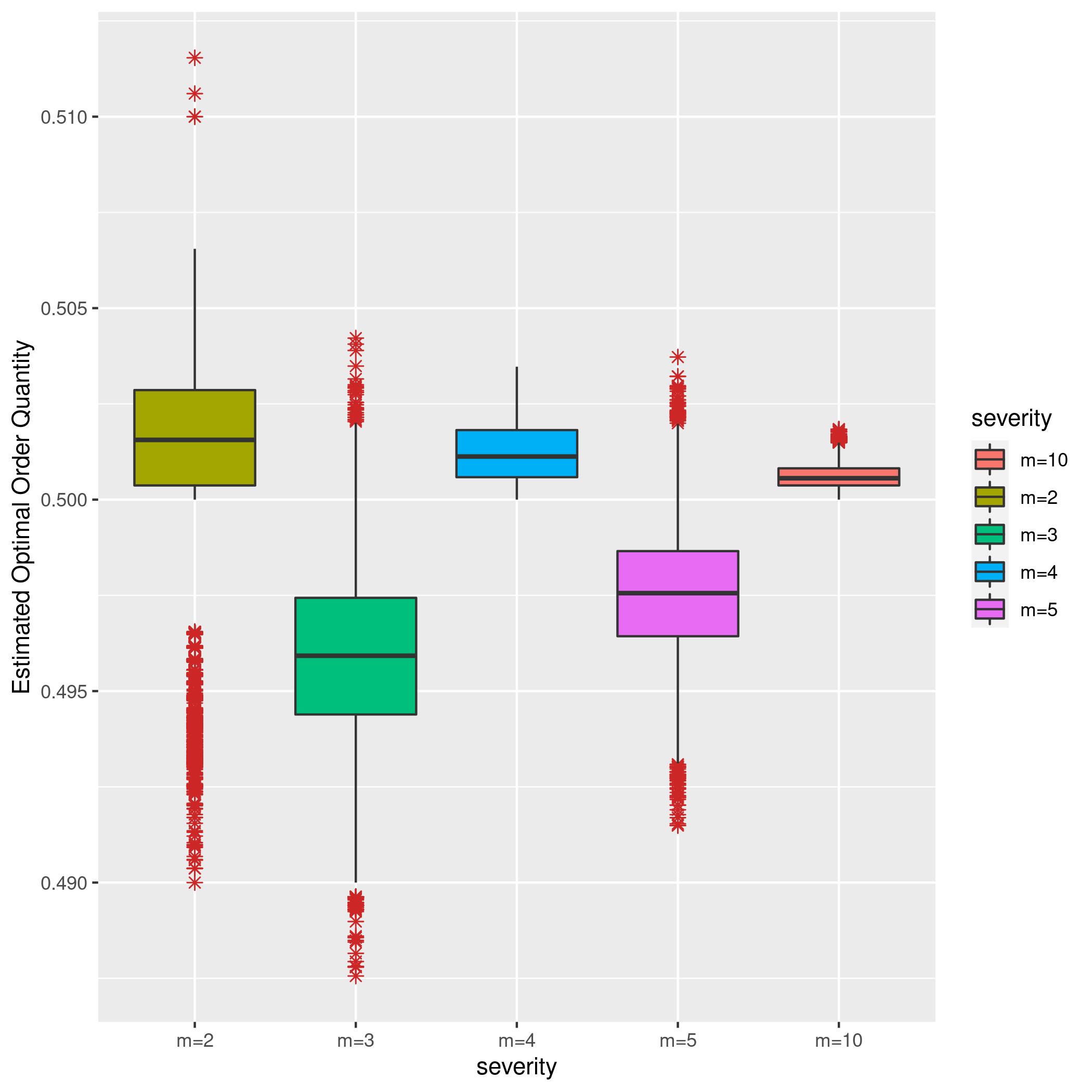}
     \caption{$\lambda=1.05$}
         \label{Qhat5unif}
     \end{subfigure}
     \hfill
     \begin{subfigure}[h]{0.3\textwidth}
         \centering \includegraphics[width=\textwidth]{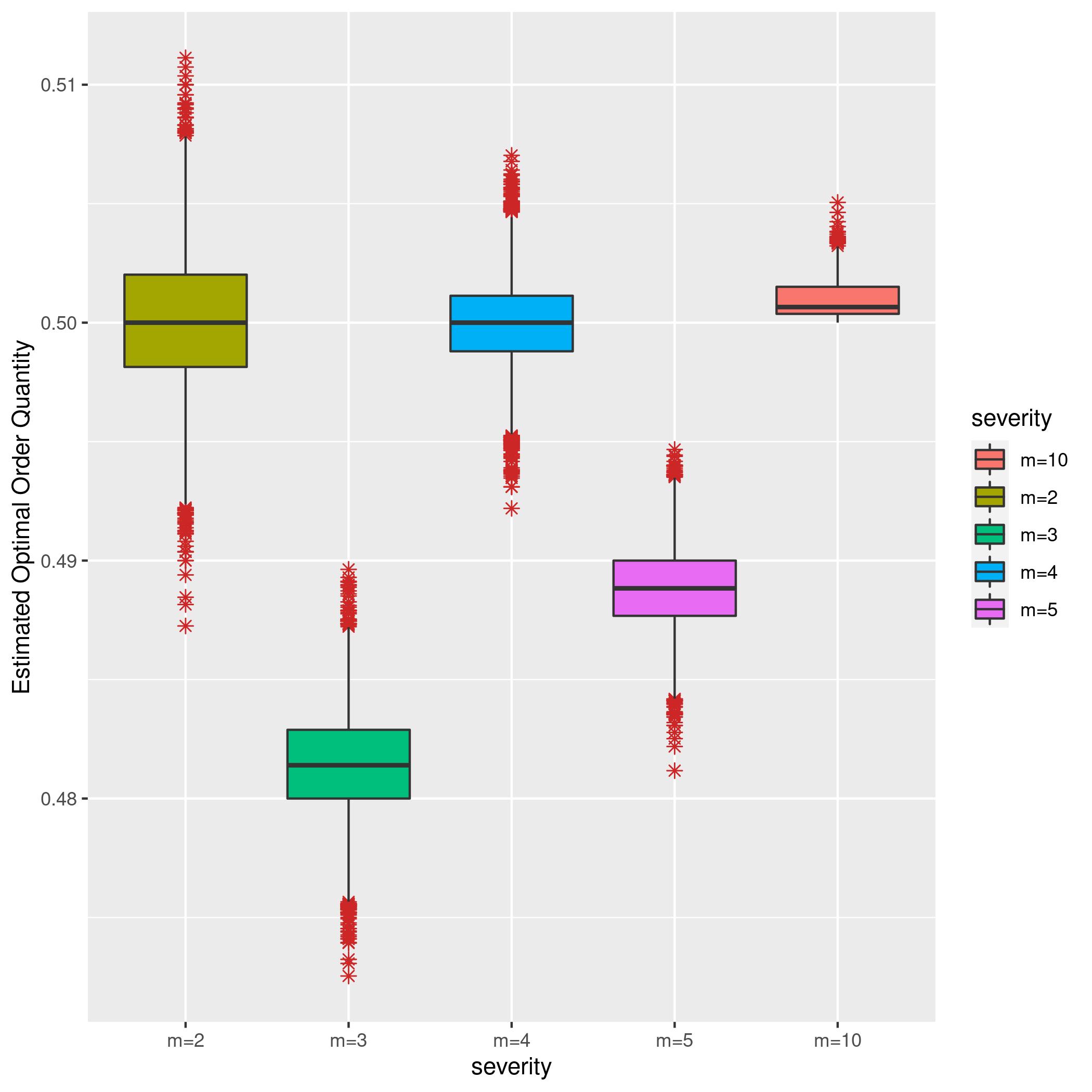}
         \caption{$\lambda=1.25$}
         \label{Qhat6unif}
     \end{subfigure} \\
          \begin{subfigure}[h]{0.3\textwidth}
         \centering
         \includegraphics[width=\textwidth]{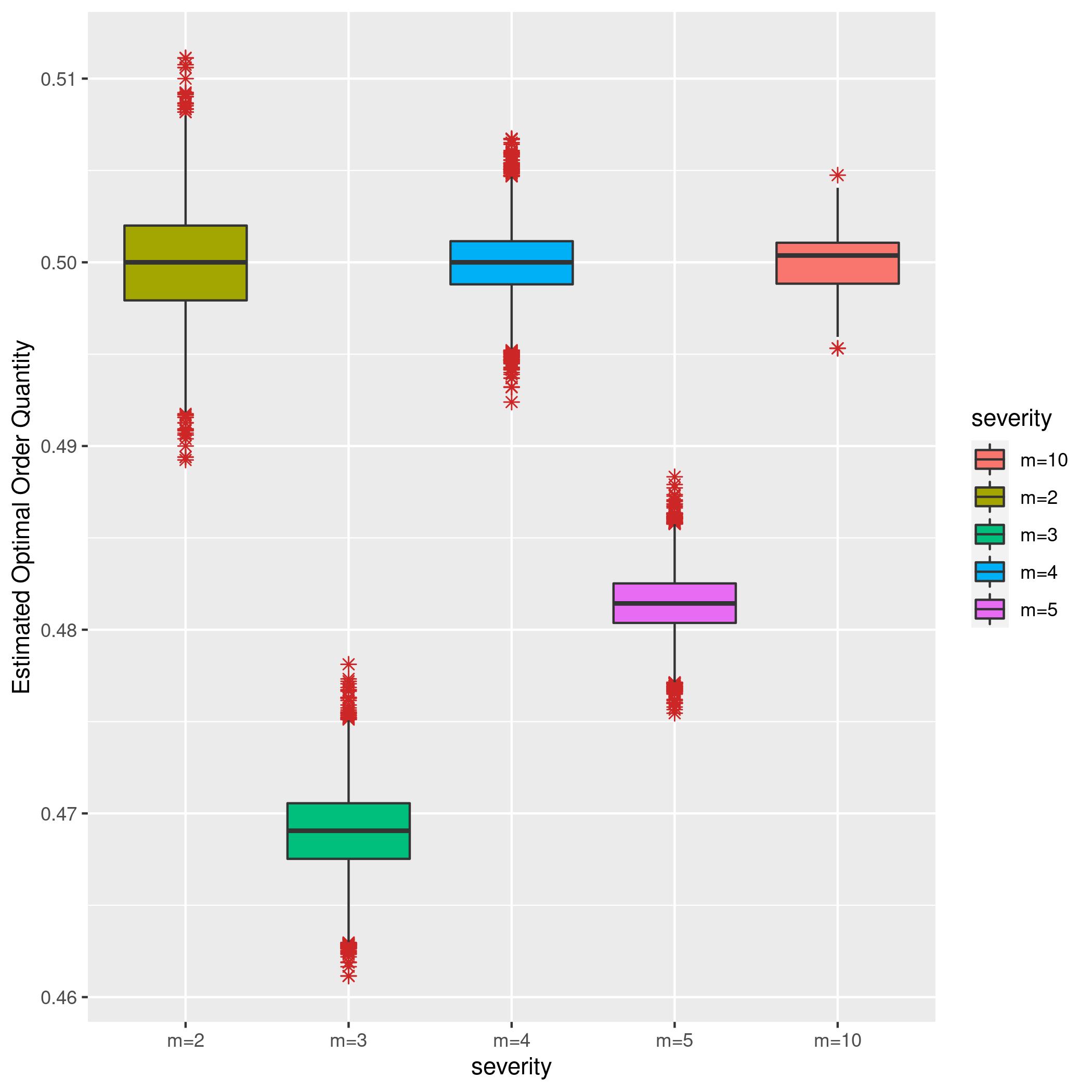}
         \caption{$\lambda=1.45$}
         \label{Qhat7unif}
     \end{subfigure}
     \hfill
     \begin{subfigure}[h]{0.3\textwidth}
         \centering
     \includegraphics[width=\textwidth]{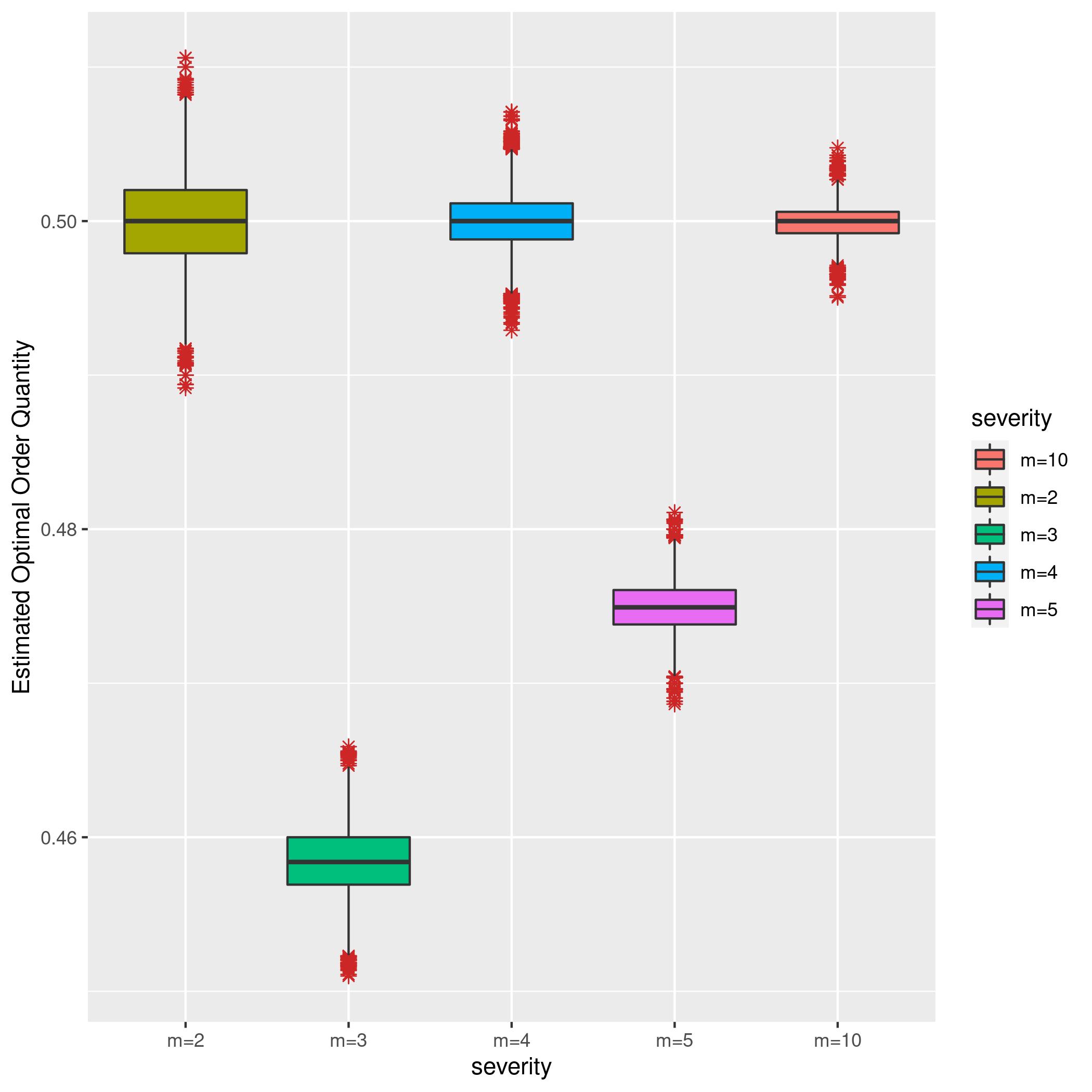}
     \caption{$\lambda=1.65$}
         \label{Qhat8unif}
     \end{subfigure}
     \hfill
     \begin{subfigure}[h]{0.3\textwidth}
         \centering \includegraphics[width=\textwidth]{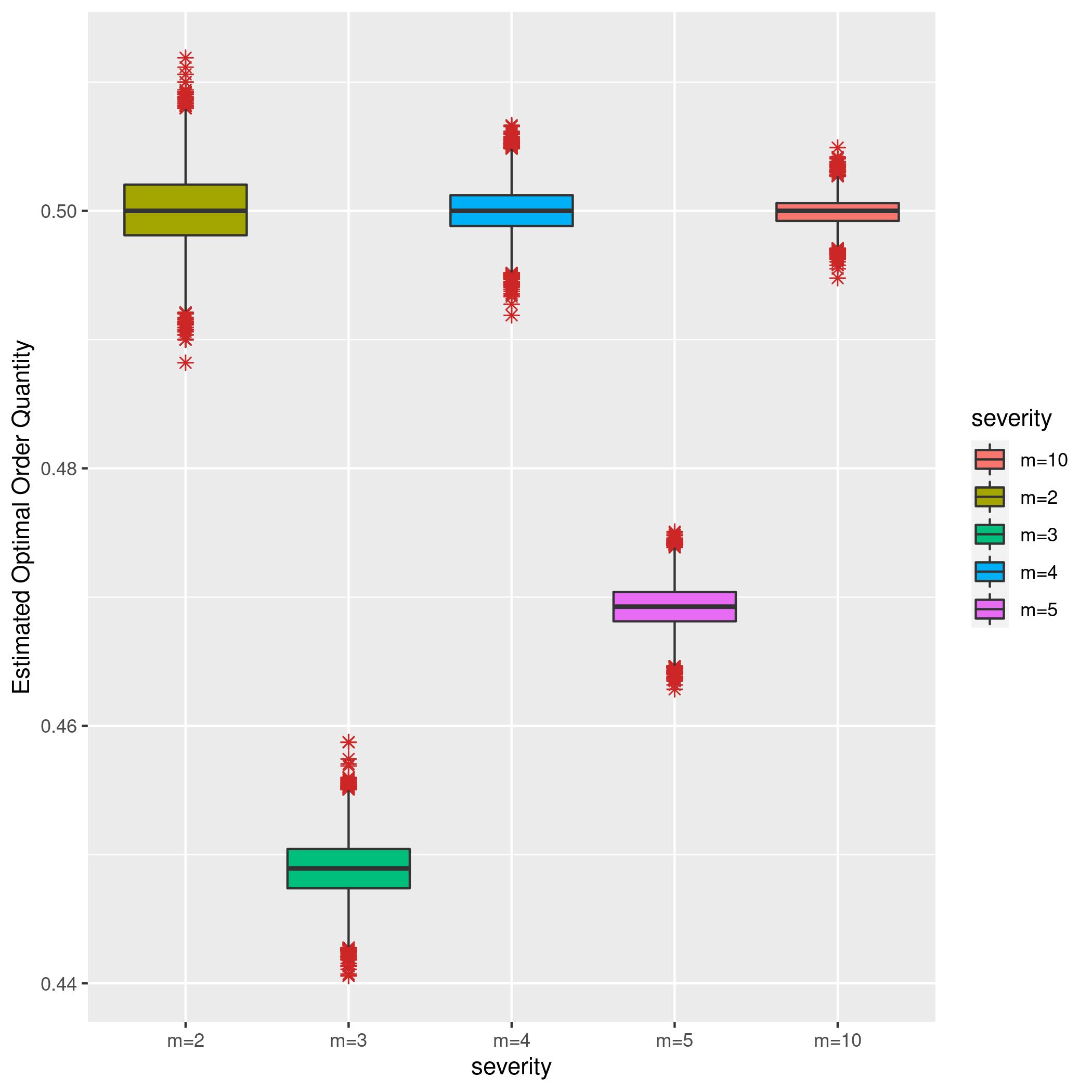}
         \caption{$\lambda=1.85$}
         \label{Qhat9unif}
     \end{subfigure}
        \caption{Boxplot of estimated order quantity for different degrees of severity ($m$) }
    \label{UnifDenPlot}
\end{figure}
\begin{figure}[h]
     \centering
     \begin{subfigure}[h]{0.3\textwidth}
         \centering
         \includegraphics[width=\textwidth]{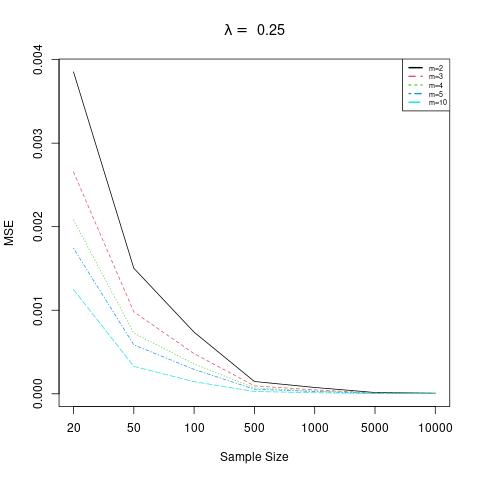}
         \caption{$\lambda=0.25$}
         \label{MSE1unif}
     \end{subfigure}
     \hfill
     \begin{subfigure}[h]{0.3\textwidth}
         \centering
     \includegraphics[width=\textwidth]{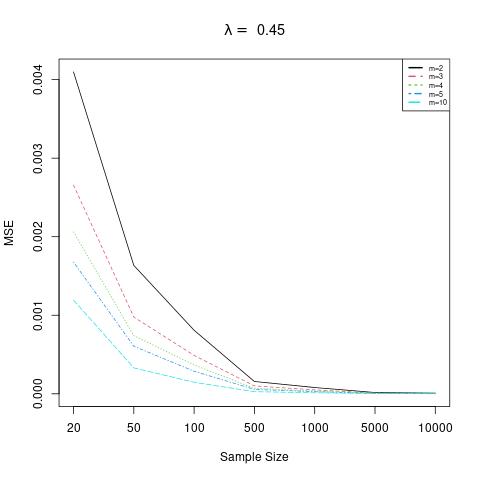}
     \caption{$\lambda=0.45$}
         \label{MSE2unif}
     \end{subfigure}
     \hfill
     \begin{subfigure}[h]{0.3\textwidth}
         \centering \includegraphics[width=\textwidth]{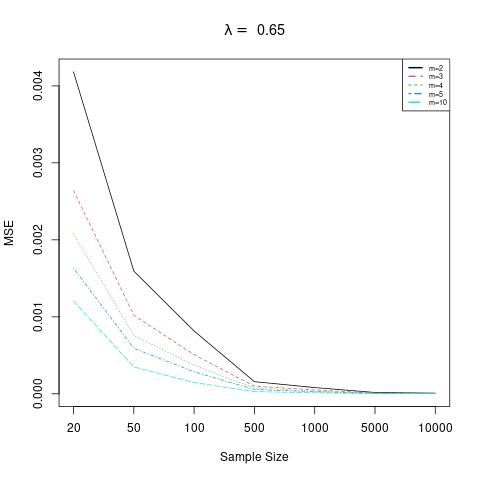}
         \caption{$\lambda=0.65$}
         \label{MSE3unif}
     \end{subfigure}
     \\
     \begin{subfigure}[h]{0.3\textwidth}
         \centering
         \includegraphics[width=\textwidth]{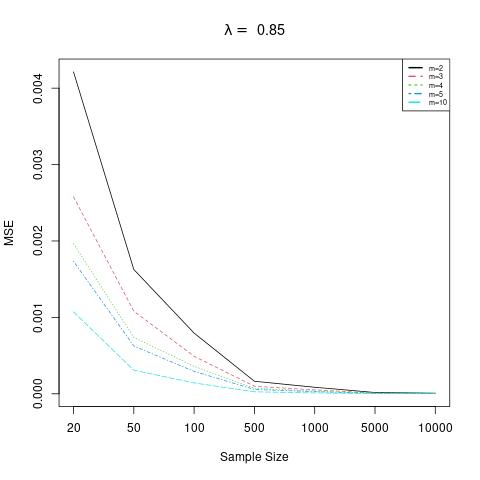}
         \caption{$\lambda=0.85$}
         \label{MSE4Unif}
     \end{subfigure}
     \hfill
     \begin{subfigure}[h]{0.3\textwidth}
         \centering
     \includegraphics[width=\textwidth]{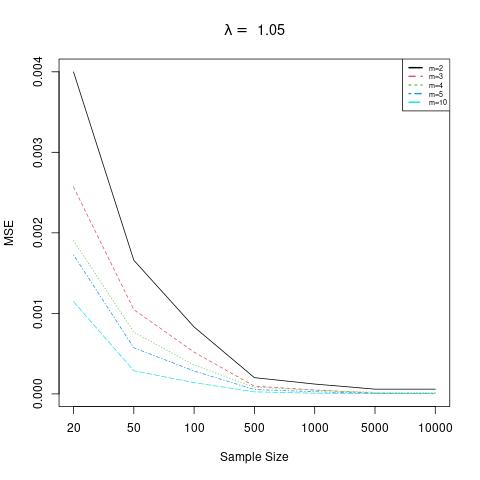}
     \caption{$\lambda=1.05$}
         \label{MSE5Unif}
     \end{subfigure}
     \hfill
     \begin{subfigure}[h]{0.3\textwidth}
         \centering \includegraphics[width=\textwidth]{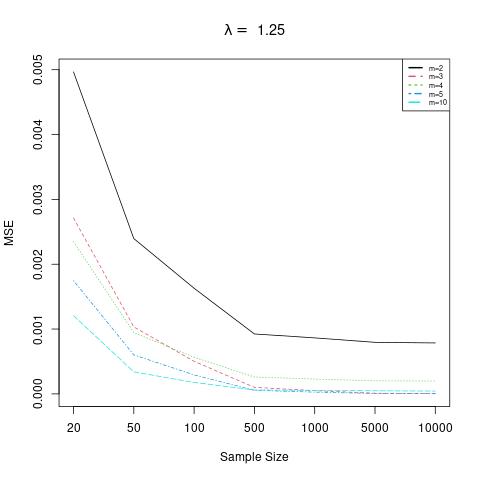} \caption{$\lambda=1.25$}
         \label{MSE6Unif}
     \end{subfigure} \\
          \begin{subfigure}[h]{0.3\textwidth}
         \centering
         \includegraphics[width=\textwidth]{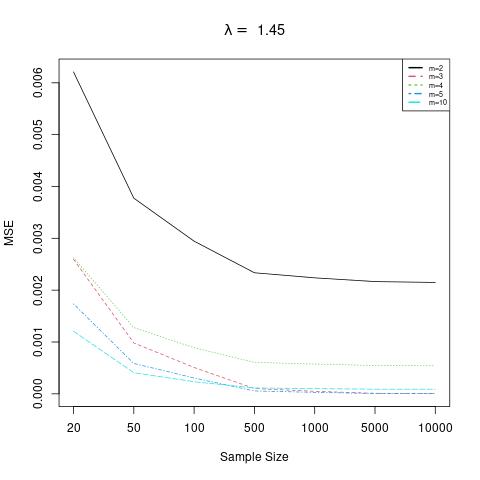}
         \caption{$\lambda=1.45$}
         \label{MSE7unif}
     \end{subfigure}
     \hfill
     \begin{subfigure}[h]{0.3\textwidth}
         \centering
     \includegraphics[width=\textwidth]{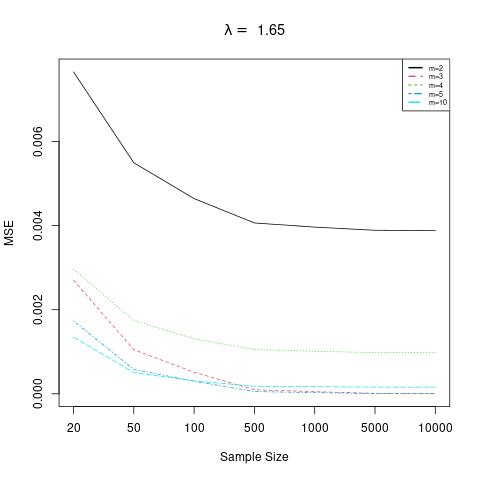}
     \caption{$\lambda=1.65$}
         \label{MSE8unif}
     \end{subfigure}
     \hfill
     \begin{subfigure}[h]{0.3\textwidth}
         \centering \includegraphics[width=\textwidth]{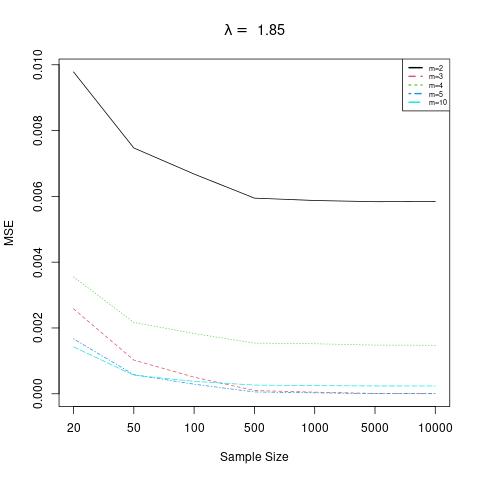}
         \caption{$\lambda=1.85$}
         \label{MSE9unif}
     \end{subfigure}
        \caption{MSE of estimated order quantity for different degrees of severity ($m$) }
    \label{UnifMSEPlot}
\end{figure}

\begin{figure}[h]
     \centering
     \begin{subfigure}[h]{0.3\textwidth}
         \centering
         \includegraphics[width=\textwidth]{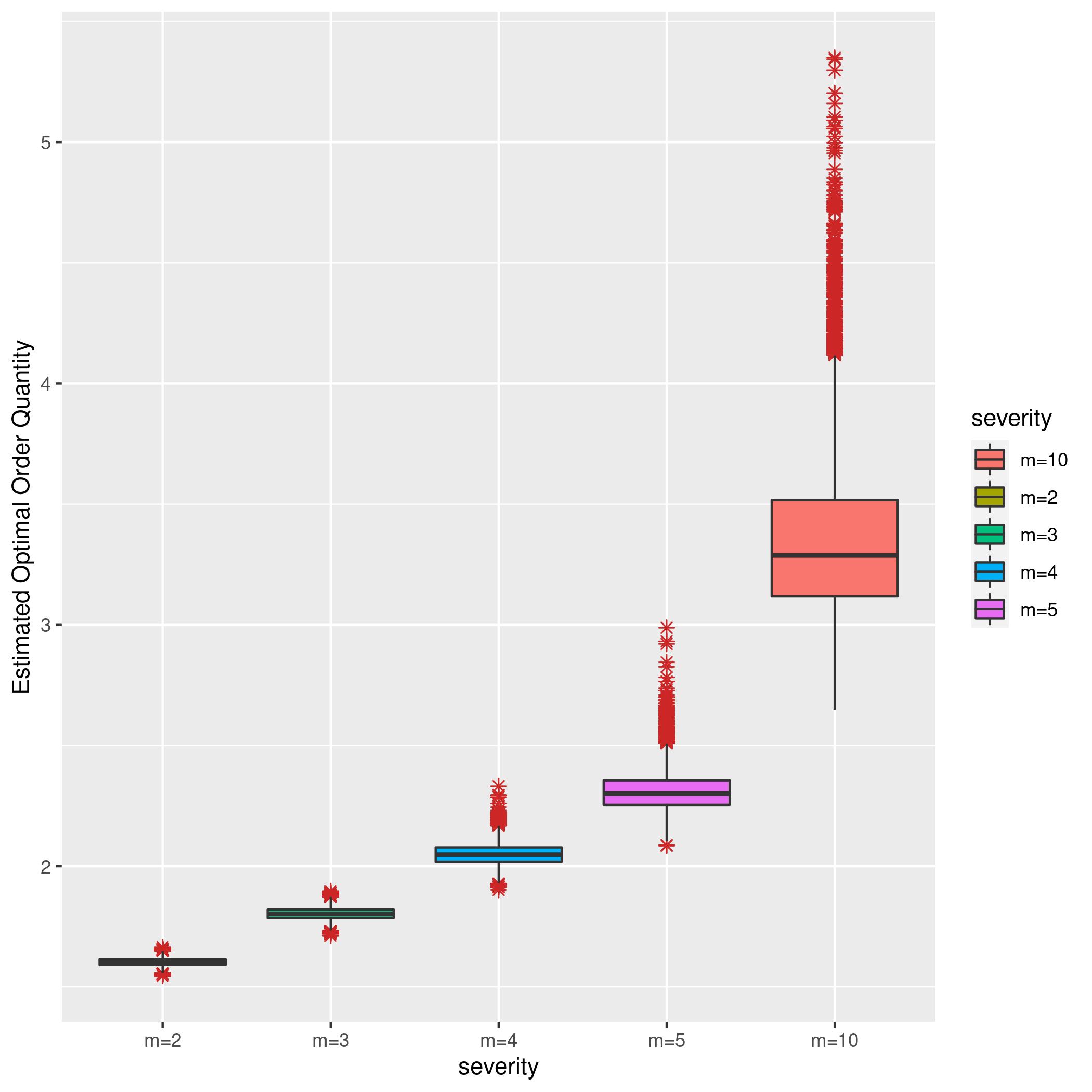}
         \caption{$\lambda=0.25$}
         \label{Qhat1Exp}
     \end{subfigure}
     \hfill
     \begin{subfigure}[h]{0.3\textwidth}
         \centering
     \includegraphics[width=\textwidth]{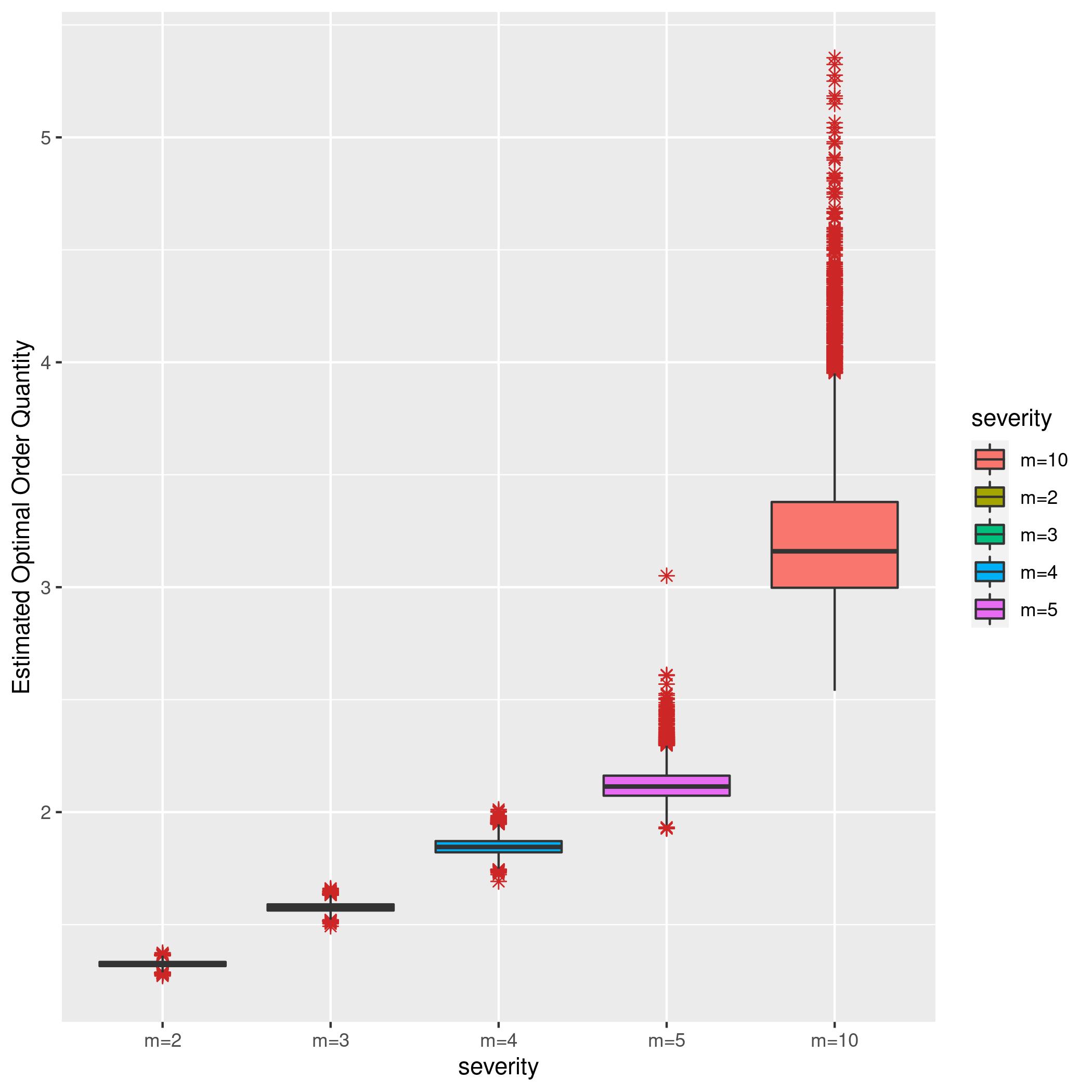}
     \caption{$\lambda=0.45$}
         \label{Qhat2Exp}
     \end{subfigure}
     \hfill
     \begin{subfigure}[h]{0.3\textwidth}
         \centering \includegraphics[width=\textwidth]{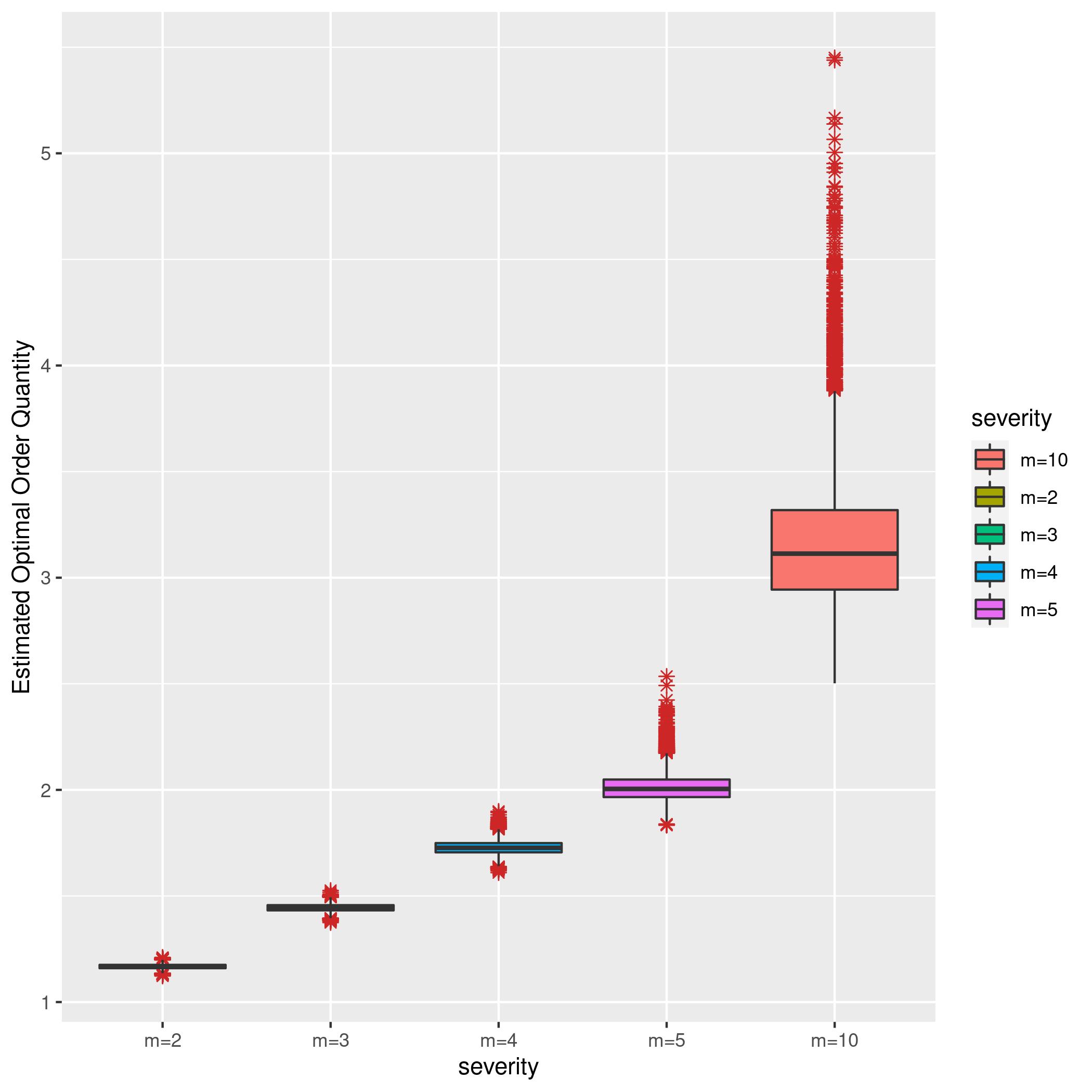}
         \caption{$\lambda=0.65$}
         \label{Qhat3Exp}
     \end{subfigure}
     \\
     \begin{subfigure}[h]{0.3\textwidth}
         \centering
         \includegraphics[width=\textwidth]{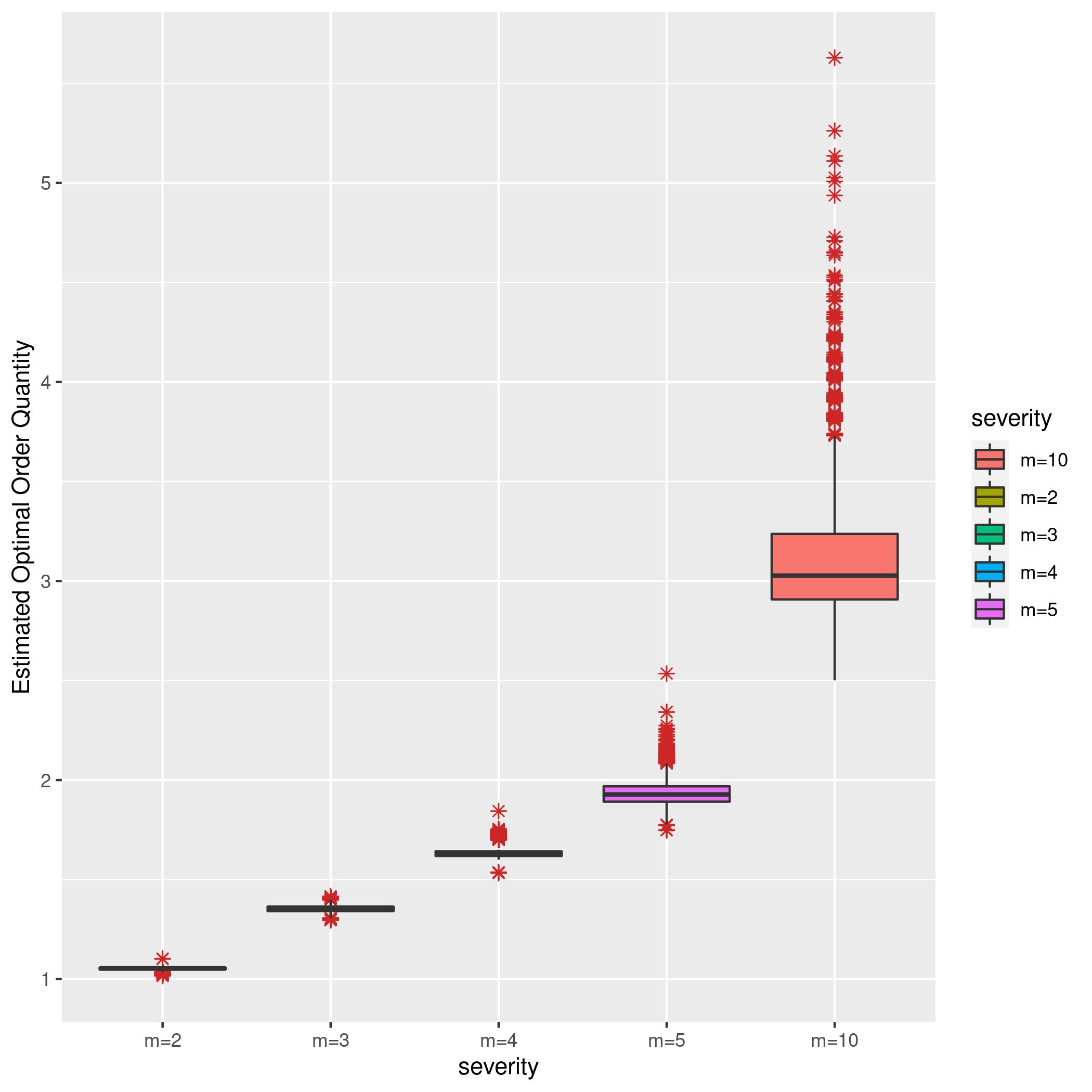}
         \caption{$\lambda=0.85$}
         \label{Qhat4Exp}
     \end{subfigure}
     \hfill
     \begin{subfigure}[h]{0.3\textwidth}
         \centering
     \includegraphics[width=\textwidth]{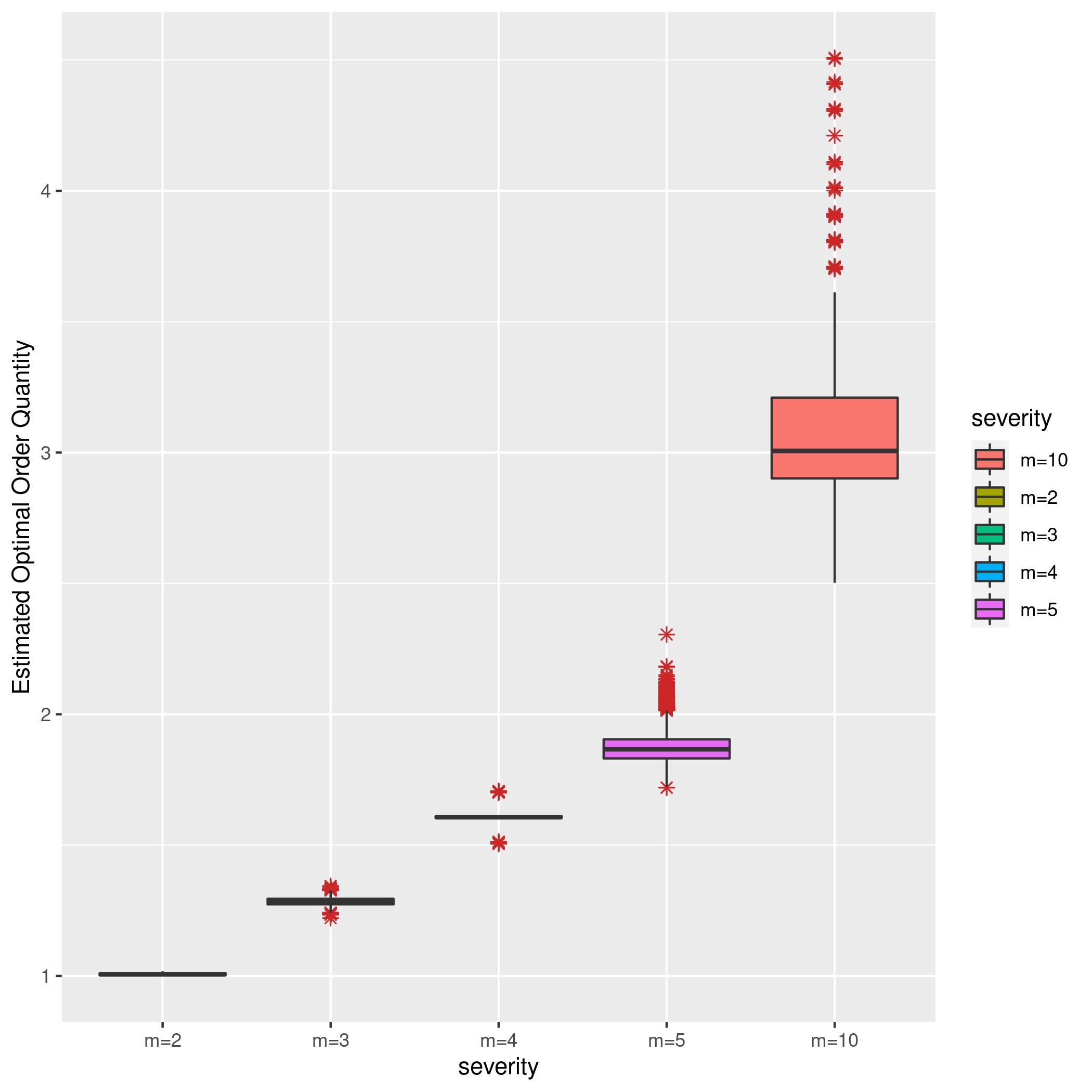}
     \caption{$\lambda=1.05$}
         \label{Qhat5Exp}
     \end{subfigure}
     \hfill
     \begin{subfigure}[h]{0.3\textwidth}
         \centering \includegraphics[width=\textwidth]{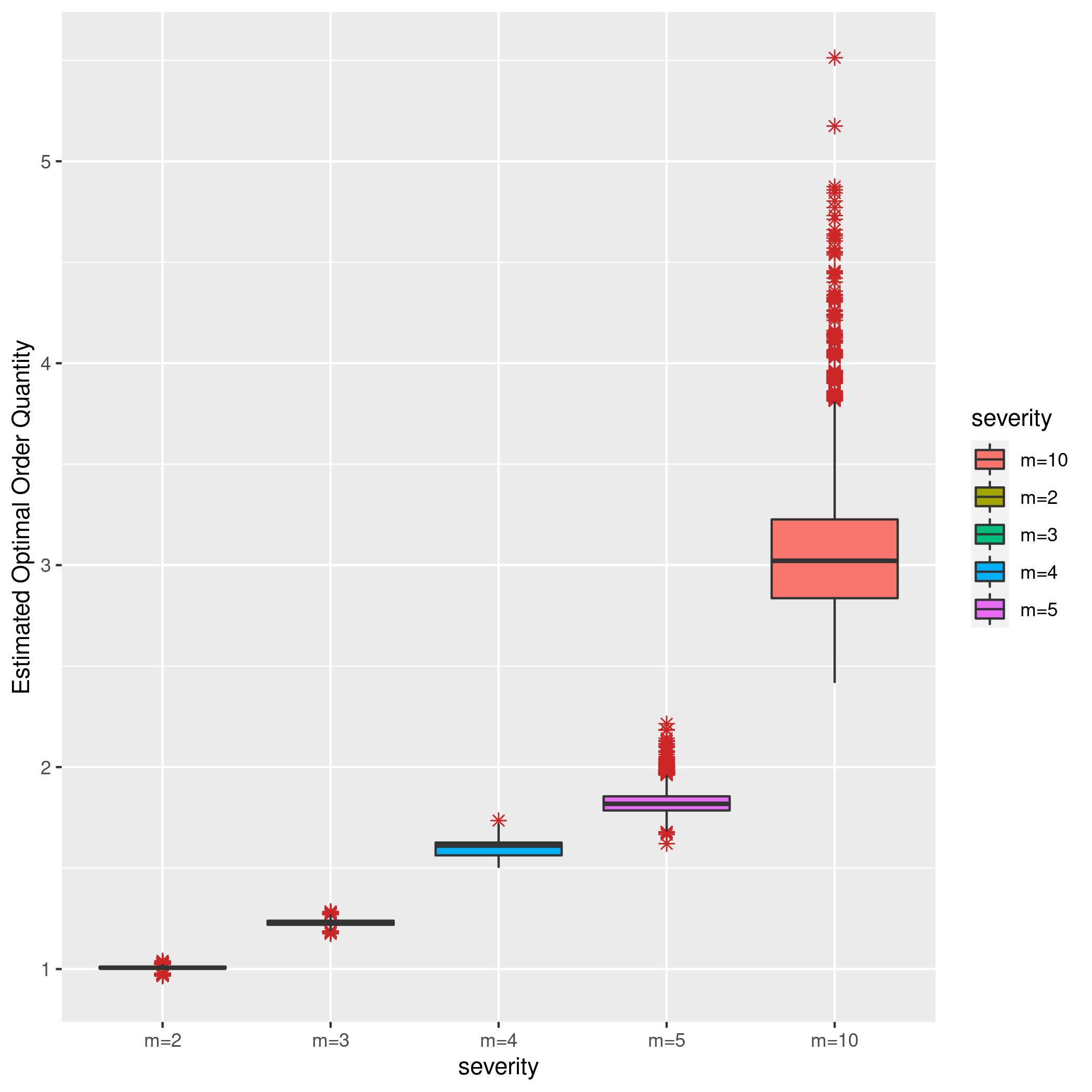}
         \caption{$\lambda=1.25$}
         \label{Qhat6Exp}
     \end{subfigure} \\
          \begin{subfigure}[h]{0.3\textwidth}
         \centering
         \includegraphics[width=\textwidth]{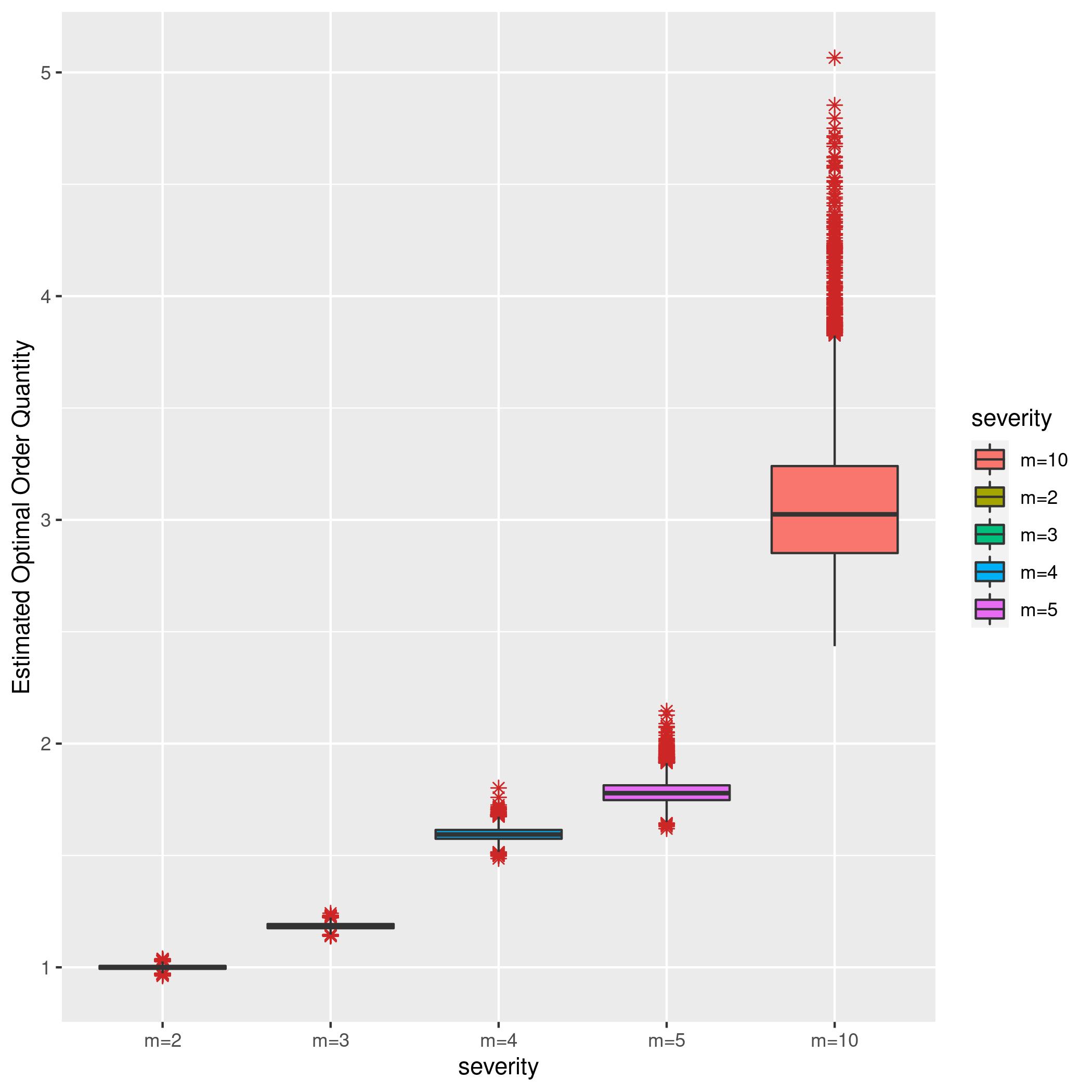}
         \caption{$\lambda=1.45$}
         \label{Qhat7Exp}
     \end{subfigure}
     \hfill
     \begin{subfigure}[h]{0.3\textwidth}
         \centering
     \includegraphics[width=\textwidth]{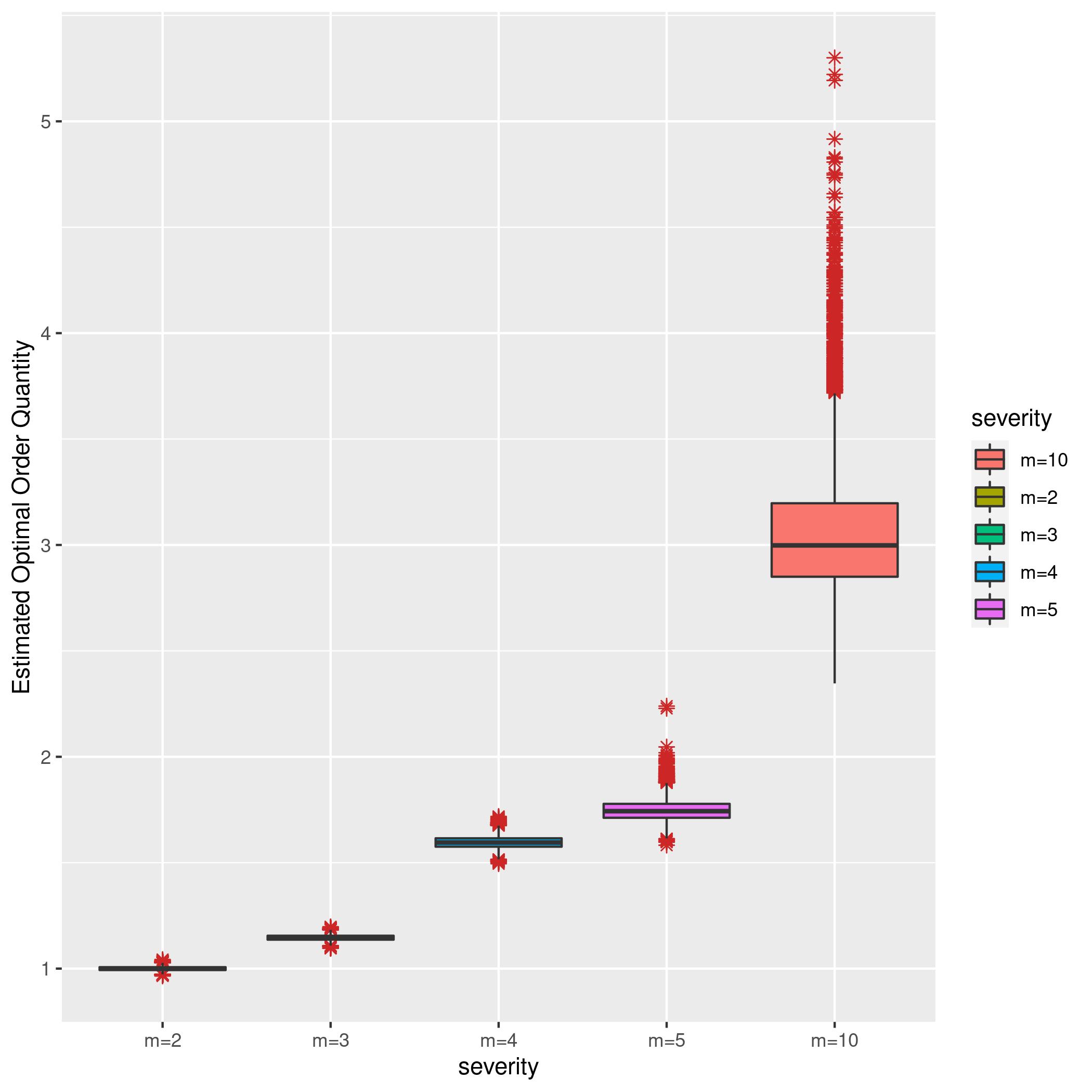}
     \caption{$\lambda=1.65$}
         \label{Qhat8Exp}
     \end{subfigure}
     \hfill
     \begin{subfigure}[h]{0.3\textwidth}
         \centering \includegraphics[width=\textwidth]{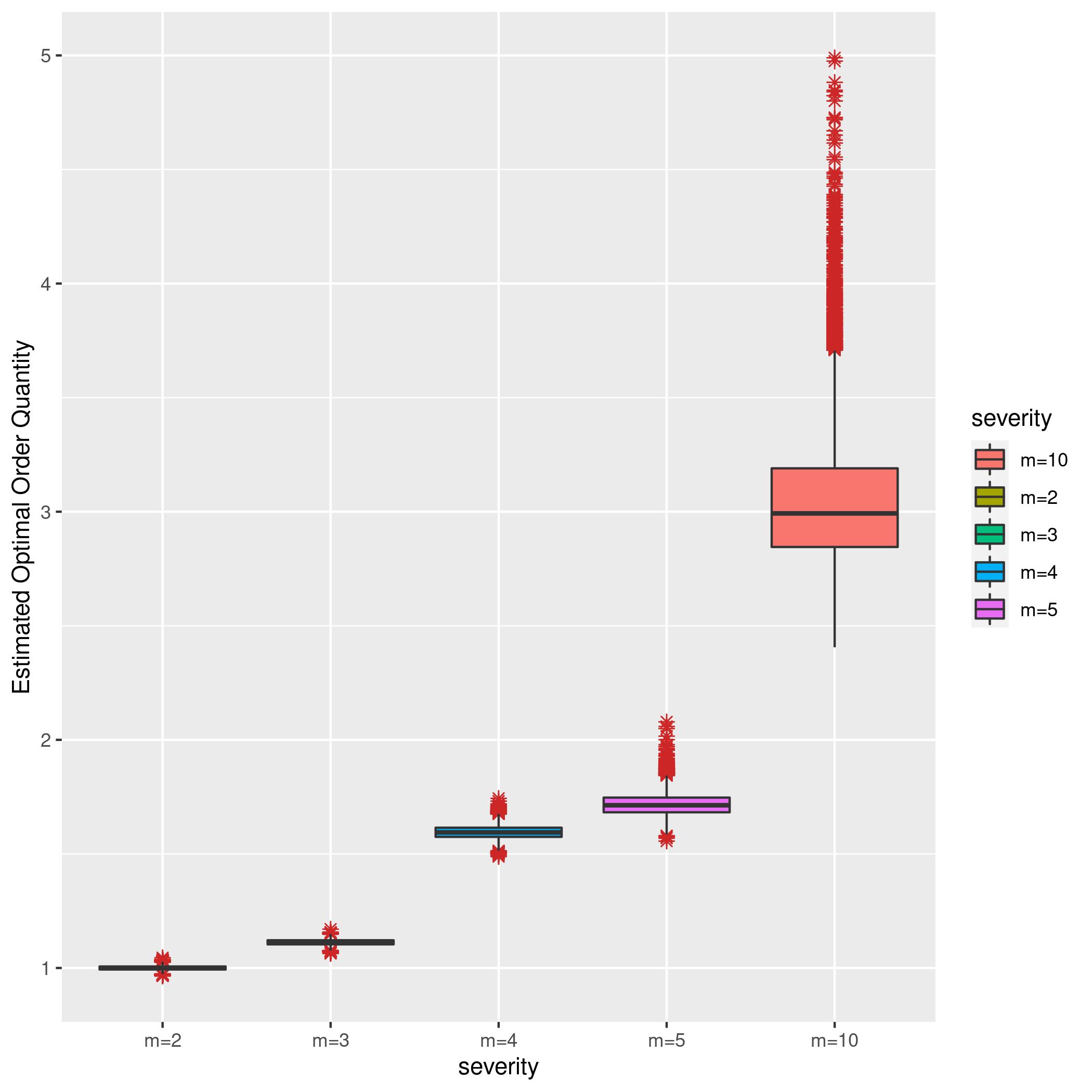}
         \caption{$\lambda=1.85$}
         \label{Qhat9Exp}
     \end{subfigure}
        \caption{Boxplot of estimated order quantity for different degrees of severity ($m$) }
    \label{ExpDenPlot}
\end{figure}
\begin{figure}[h]
     \centering
     \begin{subfigure}[h]{0.3\textwidth}
         \centering
         \includegraphics[width=\textwidth]{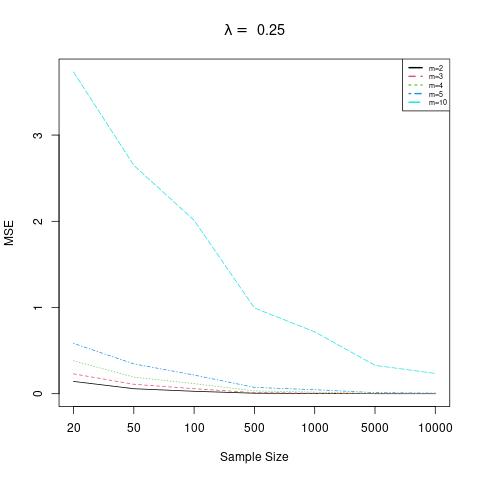}
         \caption{$\lambda=0.25$}
         \label{MSE1Exp}
     \end{subfigure}
     \hfill
     \begin{subfigure}[h]{0.3\textwidth}
         \centering
     \includegraphics[width=\textwidth]{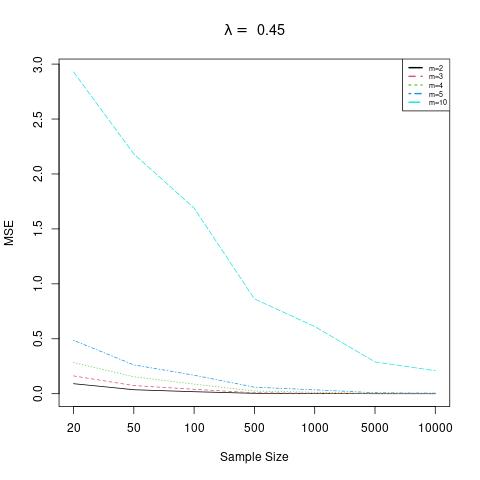}
     \caption{$\lambda=0.45$}
         \label{MSE2Exp}
     \end{subfigure}
     \hfill
     \begin{subfigure}[h]{0.3\textwidth}
         \centering \includegraphics[width=\textwidth]{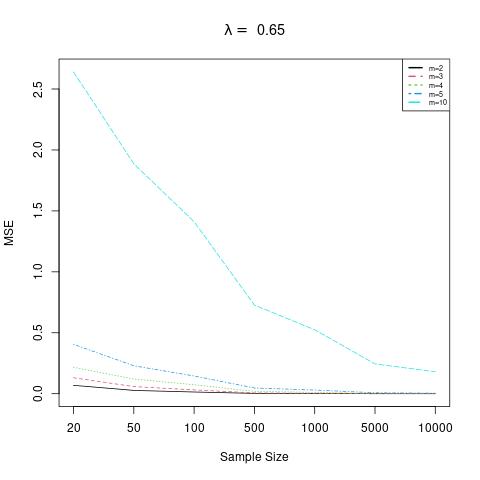}
         \caption{$\lambda=0.65$}
         \label{MSE3Exp}
     \end{subfigure}
     \\
     \begin{subfigure}[h]{0.3\textwidth}
         \centering
         \includegraphics[width=\textwidth]{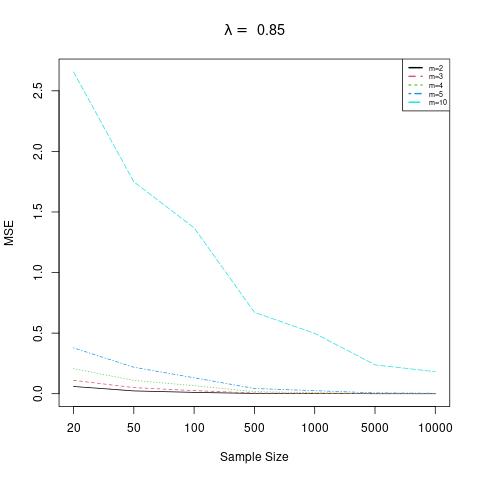}
         \caption{$\lambda=0.85$}
         \label{MSE4Exp}
     \end{subfigure}
     \hfill
     \begin{subfigure}[h]{0.3\textwidth}
         \centering
     \includegraphics[width=\textwidth]{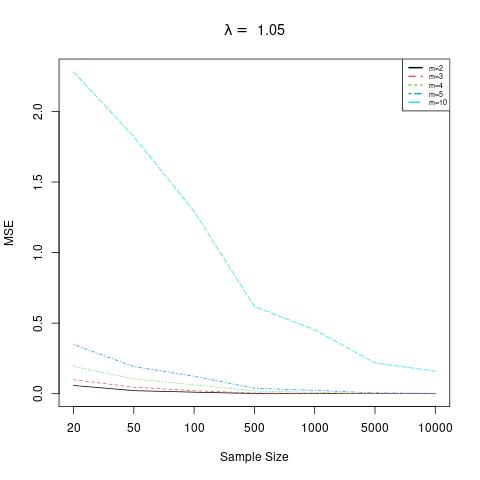}
     \caption{$\lambda=1.05$}
         \label{MSE5Exp}
     \end{subfigure}
     \hfill
     \begin{subfigure}[h]{0.3\textwidth}
         \centering \includegraphics[width=\textwidth]{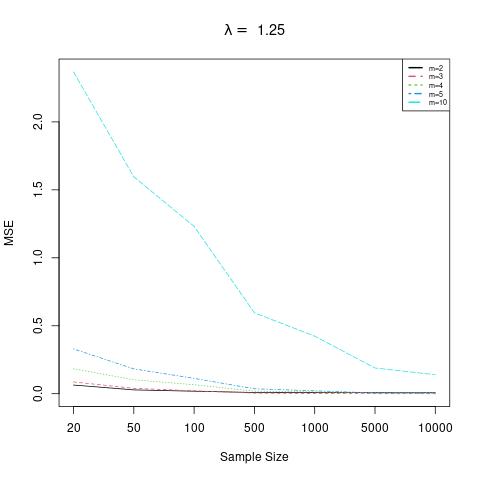} \caption{$\lambda=1.25$}
         \label{MSE6Exp}
     \end{subfigure} \\
          \begin{subfigure}[h]{0.3\textwidth}
         \centering
         \includegraphics[width=\textwidth]{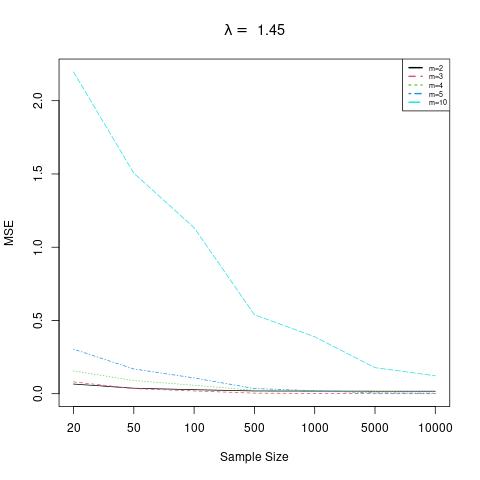}
         \caption{$\lambda=1.45$}
         \label{MSE7Exp}
     \end{subfigure}
     \hfill
     \begin{subfigure}[h]{0.3\textwidth}
         \centering
     \includegraphics[width=\textwidth]{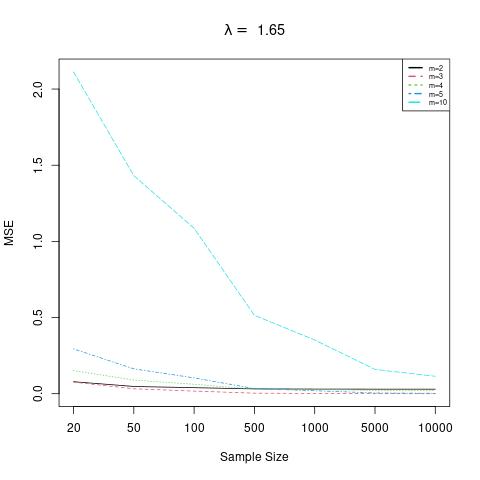}
     \caption{$\lambda=1.65$}
         \label{MSE8Exp}
     \end{subfigure}
     \hfill
     \begin{subfigure}[h]{0.3\textwidth}
         \centering \includegraphics[width=\textwidth]{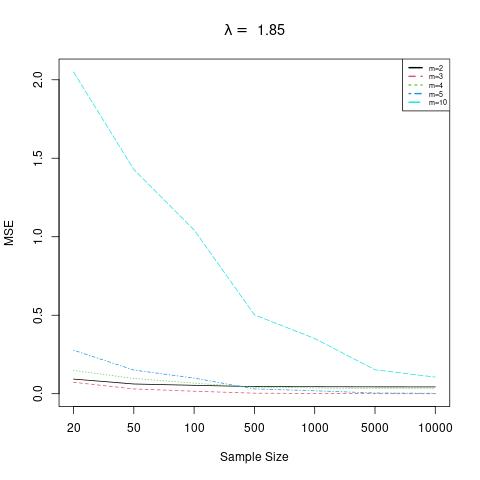}
         \caption{$\lambda=1.85$}
         \label{MSE9Exp}
     \end{subfigure}
        \caption{MSE of estimated order quantity for different degrees of severity ($m$) }
    \label{ExpMSEPlot}
\end{figure}

\end{document}